\documentclass[conference, 10pt]{IEEEtran}
\usepackage{float}
\usepackage{algorithm, algpseudocode}
\usepackage{graphicx}
\usepackage{caption}
\usepackage{color}
\usepackage{subcaption}
\usepackage{tabularx}
\usepackage{amsmath,amsthm,amssymb,amstext}
\usepackage{mathtools}
\usepackage{comment}
\usepackage{multirow}
\usepackage{cite}
\hyphenation{op-tical net-works semi-conduc-tor}
\newtheorem{theorem}{Theorem}
\newtheorem{definition}{Definition}
\newtheorem{lemma}{Lemma}

\algnewcommand\INPUT{\item[\textbf{Input:}]}%
\algnewcommand\OUTPUT{\item[\textbf{Output:}]}%

\DeclarePairedDelimiter\ceil{\lceil}{\rceil}


\begin{document}
\title{Optimal User-Cell Association for 360 Video Streaming over Dense Wireless Networks}
\author{
\IEEEauthorblockN{Po-Han Huang and Konstantinos Psounis}
\IEEEauthorblockA{Ming Hsieh Department of Electrical Engineering\\University of  Southern California, Los Angeles, CA, USA, 90089\\
Email: \{pohanh, kpsounis\}@usc.edu}
}

\maketitle

\begin{abstract}
Delivering 360 degree video streaming for virtual and augmented reality presents many technical challenges especially in bandwidth starved
wireless environments.
Recently, a so-called two-tier approach has been proposed which delivers a basic-tier chunk and select enhancement-tier chunks 
to improve user experience while reducing network resources consumption.
The video chunks are to be transmitted via unicast or multicast over an ultra-dense small cell infrastructure with enough bandwidth 
where small cells store video chunks in local caches. 
In this setup, user-cell association algorithms play a central role to efficiently deliver video since users may only download video chunks
from the cell they are associated with.
Motivated by this, we jointly formulate the problem of user-cell association and video chunk multicasting/unicasting
as a mixed integer linear programming, prove its NP-hardness, and study the optimal solution via the Branch-and-Bound method.
We then propose two polynomial-time, approximation algorithms and show via extensive simulations that they are 
near-optimal in practice and improve user experience by 30\% compared to baseline user-cell association schemes.
\end{abstract}
\begin{IEEEkeywords}
360-degree video, wireless virtual/augmented reality, resource allocation, hybrid multicast and unicast. 
\end{IEEEkeywords}

\section{Introduction}
\label{Intro}
Delivering 360 video streaming for virtual and augmented reality (VR/AR) is the next big challenge in wireless networks due to the associated high-bandwidth demand and 
the dynamic nature of such applications \cite{Trestian:Survey,Argyriou:Network2017}. 
TTo make this come true, several research directions have been investigated, the most promising ones being the reduction of the amount of data for delivery and the increase of the available bandwidth via new wireless technologies. 
To reduce the amount of data for delivery, two-tier 360 video systems and tile-based 360 video streaming have been recently proposed in \cite{Qian:AllThingsCellular2016, Chen:GLOBECOM2017, Chakareski:VR/AR2017, Ghosh:arXiv2017, Duanmu:VR/AR2017, Corbillon:MM2017, Sun:MMSys2018}. 
The main idea of these technologies is to divide the whole video into a basic-tier chunk and multiple enhancement-tier chunks (or tiles), and deliver the basic-tier chunk and a portion of enhancement-tier chunks to the users based on their specific requirements and the placement of the enhancement-tier chunks. The basic-tier chunk is used to ensure video availability from any angle while the enhancement-tier chunk is used to improve user experience. 

To increase wireless bandwidth the industry is envisioning 5G systems with dense base station deployments. Such small cell architectures may support wireless virtual reality \cite{Argyriou:Network2017, Huang:WONS2017, Huang:COMCOM2019}
as their increased density, which can be up to 75-200 small cells per square km, see, for example, reports from the 5G America/Small Cell Forum \cite{5GAmerica,Nokia}, can provide a significantly higher system capacity by frequency reuse. A major challenge in such ultra dense deployments is user-cell association due to high traffic variability \cite{Liu:Survey, Haw:Access2019, Fooladivanda:TWC2013, Wang:Access2016, Sapountzis:INFOCOM2018, Ao:Mobihoc2016}, a problem
exacerbated by VR/AR applications. Traditional signal-to-interference-noise-based (SINR-based) user-cell association schemes may not be able to support this kind of applications and meet the quality of experience expected by users.

\begin{figure}
\vspace{0.2cm}
\centering
\includegraphics[scale=0.3]{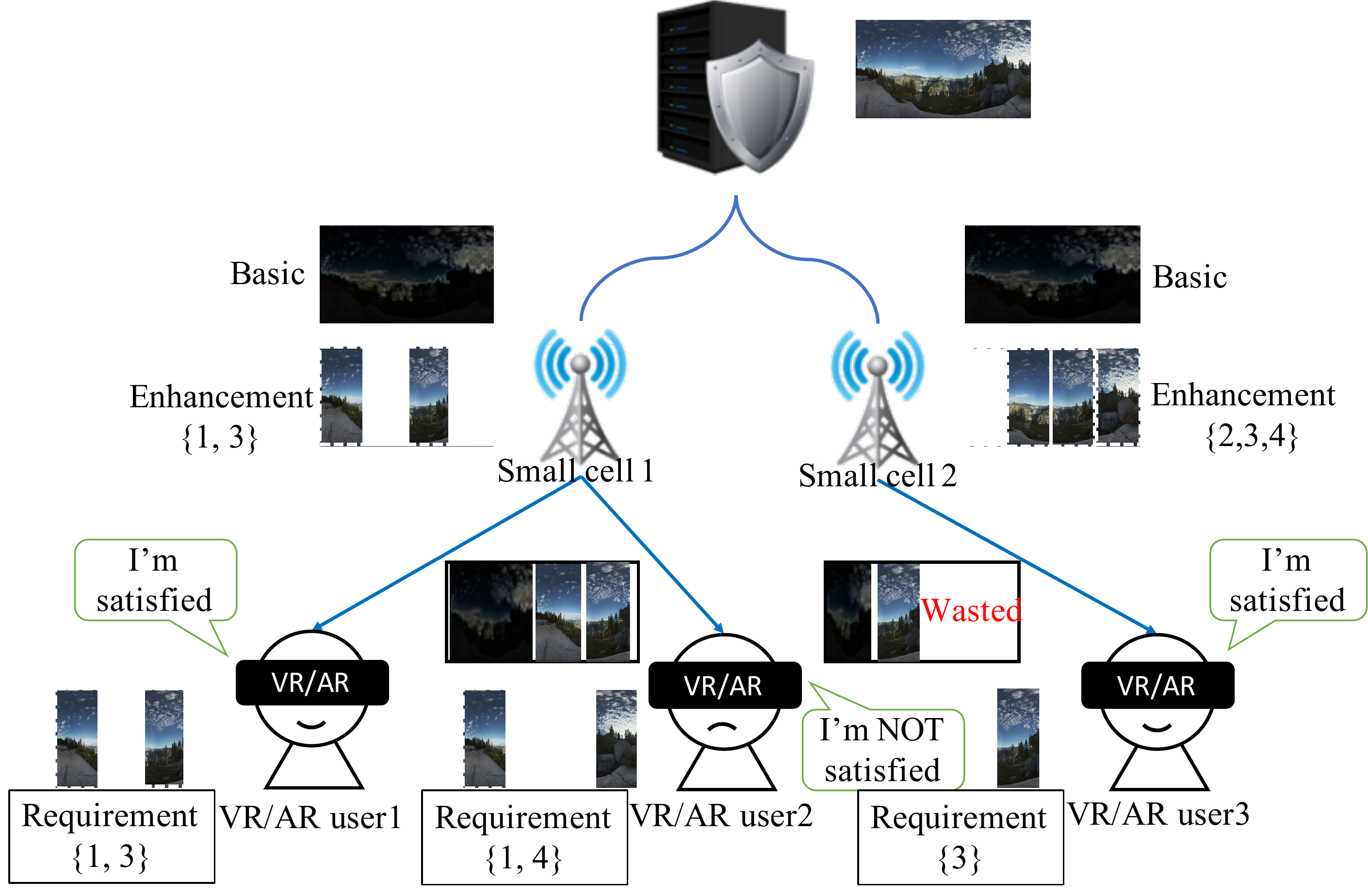}
\caption{Simple scenario for two-tier 360 video delivery over wireless networks.}
\label{fig:Scenario}
\centering
\end{figure}

Consider Fig. \ref{fig:Scenario} as a simplified example. There are two small cells with some cache capacity which are connected to a content server, and, there are three VR/AR users 
requiring some video. 
Since the small cells have limited capacity, assume small cell 1 has stored in its cache the basic view (basic-tier chunk) and enhanced views (enhanced-tier chunks) 1 and 3, and small cell 2 
has the basic view and enhanced views 2, 3, and 4. 
The three users may require different enhanced views depending on their interests, direction of walking/gaze, etc. For example, say user 1 wants views 1 and 3, user 2 wants views 1 and 4, 
and, user 3 wants view 3 only. 
It is easy to see that SINR-based user-cell association may result in suboptimal operation as it is agnostic to the cache content of the various small cells. In our example, users 1 and 2 obtain 
service from small cell 1, user 3 obtains service from small cell 2, and while users 1 and 3 can receive their required views user 2 can only receive one of his/her required views, even though the system
has enough capacity to satisfy all users. 

There are real world VR/AR applications which are expected to have this challenge at a much larger scale. 
For example, the National Basketball Association (NBA) would like to have VR support in each game \cite{NBA:VR}, where typical stadiums have tens of thousands of seats over tens of thousands of 
square meters, e.g., the Staples Center has 21,000 seats in 88,257.9 $m^2$. 
As another example, augmented vehicular reality (AVR) \cite{AVR} aims to make autonomous driving safer by extending vehicle's visual horizon via sharing visual information with other nearby vehicles. 
These examples require to understand how to associate every user/vehicle with appropriate cells/content proxies given communication bandwidth limitations.

Motivated by the above, in this paper we investigate user-cell association schemes for 360 video streaming applications, where basic-tier and enhanced-tier video chunks
are transmitted via broadcast, multicast or unicast transmissions from small cells to users.
Since users can only receive data from the cell they are associated with and different cells will cache different content, it is evident that a joint optimization of 
user-cell association, video chunk placement, and selection of the enhanced views to be transmitted to users under a bandwidth constraint, is required to maximize
the user experience. Also, since 360 video, when available, will be the main bandwidth consumer, it is reasonable to consider it in the user-cell association decisions.
We formulate the associated joint optimization problem as a mixed integer linear programming and prove its NP-hardness by reducing it to a binary multiple knapsack problem (MKP).
We then propose three algorithms to solve the problem: one based on the Branch-and-Bound method (BB) which yields the optimal solution, and two polynomial-time 
approximation algorithms, a submodular-based greedy algorithm which we refer to as \textbf{E}fficient \textbf{L}ayered \textbf{V}ideo delivery \textbf{A}lgorithm (ELVA), 
and an MKP-based greedy algorithm which we refer to as \textbf{E}fficient \textbf{V}ideo delivery \textbf{A}lgorithm (EVA).
We show that ELVA is a $\rho(1-\frac{1}{e})$-approximation algorithm, where $\rho$ is given by the network topology, and EVA is a $\frac{1}{2}$-approximation algorithm. 
Simulation results from both small-scale settings and large-scale settings show that both algorithms outperform baseline user-cell association schemes in all scenarios. 
Noteworthy, ELVA has a near-optimal performance and can increase the quality of the user experience by 30\% compared to SINR-based user-cell association.

The rest of this paper is organized as follows. 
Section \ref{ReWork} summarizes the related work, Section \ref{SysModel} presents the system architecture and the main assumptions of our model,
Section \ref{Problem} formulates the problem at hand, Section \ref{Algorithm} presents the three aforementioned algorithms and formally studies their performance,
and Section \ref{Simulation} evaluates the performance of the proposed algorithms using simulations under a variety of realistic scenarios.
Last, Section \ref{Conclusion} concludes the paper.

\section{Related Work} 
\label{ReWork}

\noindent
{\bf VR/AR Content Delivery over Wireless Networks:}
The efficient delivery of mobile video content is becoming one of the highest priorities for the emerging ultra-dense small-cell deployments \cite{Argyriou:Network2017}.
Especially for VR/AR applications, a cellular-friendly streaming scheme was studied in \cite{Qian:AllThingsCellular2016}, where the authors demonstrated that delivering only the visible portion of views can reduce bandwidth consumption without significant degradation of the user experience. 
Based on the above idea, the authors in \cite{Chen:GLOBECOM2017} proposed a resource management mechanism to further improve user experience.
Then, the authors in \cite{Chakareski:VR/AR2017} investigated how to place caches in the cells such that performance (measured in rewards earned by the service provider) is maximized. 

\noindent
{\bf Multi-Tier Video Streaming:}
There are two approaches to deal with head movement, a key challenge in 360-degree video delivery. The first is to predict users' behavior and thus upcoming head movement, see, for example, \cite{Qian:AllThingsCellular2016,Qian:Mobicom2018,Park:Networking2019}, such that the system can proactively download future video angles. The second is to use a multi-tier 360 video system, see, for example, 
\cite{Ghosh:arXiv2017,Duanmu:VR/AR2017,Corbillon:MM2017,Sun:MMSys2018,Li:IPSN2019}, such that a low resolution basic-tier 360-degree video is always available for all angles, 
giving to the system more time to download enhanced video chunks for specific view angles. The two approaches are complementary. 
Our work reduces network resource consumption and thus applies to both approaches. Since the later approach tends to have higher bandwidth requirements, we use it as our use-case application. 

\noindent
{\bf User Association:}
Many prior works suggested that instead of SINR-based user-cell association (to be referred to as user association for brevity henceforth), application-aware or resource-aware 
user association scheme could improve user experience \cite{Liu:Survey, Haw:Access2019}.
For example, joint resource allocation and user association was investigated in \cite{Fooladivanda:TWC2013}  and a significant gain in terms of system throughput has been shown.
As another example, cache-aware user association was studied in \cite{Wang:Access2016} with the goal to minimize delay. 
Moreover, joint optimization of user association and dynamic TDD was discussed in \cite{Sapountzis:INFOCOM2018} where the authors showed that the system could achieve 
higher throughput for both downlink and uplink traffic by optimizing the TDD schedule and the user association at the same time.  

\noindent
{\bf Hybrid Multicast/Unicast in Wireless Networks:}
Evolved Multimedia Broadcast Multicast Services (eMBMS) was standardized in LTE network \cite{3GPP:23.246} using a fixed number of resources. 
Recently, many studies have proposed unicast services and dynamic resource allocation for eMBMS to enhance resource utilization and system performance.
For example, \cite{Monserrat:TB2012} studied dynamic eMBMS for speeding up file delivery, \cite{Chen:INFOCOM2015} proposed an efficient user grouping mechanism 
in the presence of hybrid multicast and unicast eMBMS services to achieve better system throughput, and \cite{Hou:ISM2016} investigated a hybrid multicast and unicast service
for a VR application where the authors proposed a novel hyper-cast approach to reduce total bitrate and save wireless network bandwidth.

Despite the large body of work in the aforementioned distinct research areas, the performance of the overall multi-tier video delivery system jointly depends on user-cell association, 
as users can only receive data from their cell, cache and content placement, as cells cannot realistically store the basic and all enhanced views of all videos of interest, bandwidth 
resource allocation, as wireless bandwidth and the associated resource blocks are limited, and the selection of enhanced views to be transmitted to users via multicasting or unicasting. 
This joint optimization problem is the topic of this work.

\section{System Architecture} 
\label{SysModel}

\subsection{Caching Model} 
\label{Caching}
Motivated by practical considerations, we assume every small cell within the neighborhood of some view, e.g. a store front, has a copy of the basic view. 
This provides fault tolerance against system/cell failures without much cost as the size of the basic view is typically small compared to the enhanced view (e.g. 0.57/0.42 Mb/s for raw/compressed 
1080p resolution video and 2.3 Mb/s for 2K resolution video \cite{Petrangeli:MM2017}). What is more, since VR/AR users are usually in the same immersive environment, e.g., playing the same 
VR game, watching the same NBA game, etc., the basic view is indeed common to all whereas the enhanced views are usually personalized due to user-specific interests and/or the field of 
view in the users' head-mounted display (HMD). 
As a consequence, a user can always get the content of the basic view from the small cell one associates with.

Let $K$ be the maximum size of the cache in a small cell, measured by the number of enhanced views that may fit in a small cell.
For example, in Fig. \ref{fig:Scenario}, $K=2$ for small cell 1, and $K=3$ for small cell 2. 
For the enhanced views caching placement, we can apply any schemes proposed by prior work, e.g., \cite{FemtoCaching,PTAM,Ao:TMC2018,Maniotis:ArXiv2019,Mahzari:MM2018,Lin:ICC2020}. 
Under a given placement of enhanced views to caches and given bandwidth constraints, our goal is to jointly optimize user-cell association and the selection of enhanced views to be transmitted 
(via multicast or unicast) to users, such that user experience is maximized (see later for a formal definition).
Note that the caching placement and the user-cell association are typically on different time scales. Cellular service providers will infrequently reallocate enhanced views among 5G small cell caches due to the associated cost and latency to transfer these data among small cells or download them from the cloud, 
whereas user-cell association is expected to change frequently in the context of 5G ultra dense networks, see, for example \cite{Haw:Access2019,Ibrahim:TCOM2016,Kwak:GLOBECOM2017}.
For this reason, we jointly optimize the user cell association and enhanced views transmission scheduling given a caching placement, rather than also jointly optimizing the later.

\subsection{Multiple Description Video Streaming} 
\label{MDC}
We adopt the model of multiple description coding (MDC) in \cite{Zhou:TB2009} for enhanced views, which is widely used for mobile clients. 
With MDC, a small cell receives an enhanced view $k$ whose highest resolution version is of size $E_k$, as well as lower resolutions of the same enhanced view.
To simplify notation, instead of indexing the different resolution versions of $k$, we use a number between 0 and 1 to indicate the size of a lower
resolution version of $k$ as a fraction of $E_k$, e.g. 0.5$E_k$.
Clearly, the user experience is proportional to the resolution of the enhanced view that a user receives \cite{Singh:VCIP2000,Ghareeb:CNSR2010,Kazemi:MS2014}. 

\subsection{Unicast versus Multicast Transmissions}
Whether one may unicast or multicast enhanced views depends on the application.
To see this, note that for VR/AR applications, it is common that the system fuses user-specific information/metadata into an enhanced view. 
For example, if a user is playing a game in a zombie mode which all the characters in the game are zombies, and another user is playing in an elf mode which all the characters in the game are elves, both users are seeing the same game object but with different styles. 
As another example, students may read the same material with a customized presentation style in an immersive classroom, see \cite{Hou:ISM2016}. 
Since the image transfer/fusion is a computation-intensive task which VR devices (e.g., HMDs) do not support, a small cell with edge computing capabilities should take care of this. 
Hence, it makes more sense for a small cell to unicast the fused enhanced views to each user. 
That said, there is always a case that several users share the same metadata, or that there are no metadata to be fused with an enhanced view, in which case it makes sense to 
multicast the enhanced view. We start the analysis by considering unicast mode (Section \ref{sec:ilp}) and extend it to cover multicast (Section \ref{Multicasting}).

\subsection{Wireless Model}
The data rate from small cell $j$ to mobile user $i$ is defined as follows,
\begin{equation} \label{rate}
R_{i,j} = W\log_{2}{\left (1+\frac{P_{t_j}g_{i,j}(d_{i,j})}{\mathcal{N}+\sum_{n:n\neq j}{P_{t_n}g_{i,n}(d_{i,n})}}\right )},
\end{equation}
where $W$ is the size of the operational bandwidth, $P_{t_j}$ is the transmission power of small cell $j$, $g_{i,j}(\cdot)$ is the channel gain, which is a function of the Euclidean distance, $d_{i,j}$, between mobile user $i$ and small cell $j$, and $\mathcal{N}$ is the noise power.

In the context of cellular networks, data are transmitted via resource blocks (RBs) whose duration of transmission, $\tau$, is fixed. Thus, depending on the data rate, a different number of bits, $\tau R_{i,j}$, 
gets transmitted per resource block. Then, if $B$ is the size of a basic view measured in bits, it follows that the required number of RBs to deliver a basic view to user $i$ from small cell 
$j$, $N^b_{i,j}$, equals
$N^b_{i,j} = \ceil{\frac{B}{\tau R_{i,j}}}= \ceil{\frac{B}{R_{i,j}}}$,
assuming $\tau=1$ to simplify the notation and without loss of generality. 
Similarly, the required number of RBs to deliver enhanced view $k$ at the highest resolution to user $i$ from small cell $j$, $N^e_{i,j,k}$, equals
$N^e_{i,j,k} = \ceil{\frac{E_k}{R_{i,j}}}$.

As already discussed, depending on the application, enhanced views may be unicast to a specific user or multicast to a group of users.
In what follows we first consider the case where the system broadcasts the basic view to every user and unicasts user-specific enhanced views to different users 
and then extend the model and algorithms to account for a system which multicasts common enhanced views when they are requested by several users on the same cell.
\section{Problem Formulation} 
\label{Problem}
We wish to study the problem of optimal user association and resource allocation for two-tier 360 video delivery.
With this in mind, we formulate the problem via mixed integer linear programming and prove its NP-hardness,
and establish that the problem can be separated into smaller subproblems.


\subsection{Mixed Integer Linear Programming} 
\label{sec:ilp}
Let $\mathcal{M} = \{1,...,M\}$ be the set of mobile users, $\mathcal{S} = \{1,...,S\}$ be the set of small cells, 
and $\mathcal{E} = \{1,...,E\}$ be the set of enhanced views corresponding to a 360 degree basic view (e.g. for 45 degree enhanced views with no overlap, $E=8$). 
Also, let $w_{i,j,k} = \{0,1\}$ be an indicator function showing if the desired enhanced view $k$ of user $i$ is in small cell $j$, 
where for every small cell $j$ with cache size $K$, it must be that 
\begin{equation}
\sum_{\forall k \in \mathcal{E}} \left(1-\prod_{\forall i \in \mathcal{M}} (1-w_{i,j,k})\right) \leq K \nonumber
\end{equation}
to satisfy the cache size constraint.
Further, let $N_j$ be the total number of 
RBs available to small cell $j$ at each timeframe.
Last, let $x_{i,j} = \{0,1\}$ indicate if user $i$ is associated with small cell $j$, and $y_{i,j,k} \in [0,1]$ indicate the resolution in which user $i$ receives enhanced view $k$ from small cell ${j}$ assuming that the 
user is associated with this base station. Note that $y_{i,j,k}$ is by definition 0 if user $i$ is not associated with small cell $j$. If the user is associated with the small cell, a 0 value indicates no reception of the enhanced 
view while a 1 value indicates reception at the highest possible resolution. 
TABLE \ref{tab:notation} summarizes the notation.

Upon delivery of a desired enhanced view at resolution $y_{i,j,k}$, the user receives a \textit{reward} proportional to the resolution.
Our objective is to maximize the total reward from delivering enhanced views at select resolution levels to users.
With all the above in mind, we use mixed integer linear programming and formulate the problem as follows:

\begin{align}
\max_{x_{i,j},y_{i,j,k}}&~~\sum_{\forall i \in \mathcal{M}}{\sum_{\forall j \in \mathcal{S}}{\sum_{\forall k \in \mathcal{E}}{y_{i,j,k}w_{i,j,k}}}} \nonumber\\
\text{subject to:} &~~\sum_{\forall j \in \mathcal{S}}{x_{i,j}} = 1, \;\; \forall i \in \mathcal{M}, \nonumber\\
&~~y_{i,j,k} \leq x_{i,j}w_{i,j,k},  \;\; \forall i \in \mathcal{M}, \forall j \in \mathcal{S}, \forall k \in \mathcal{E}, \nonumber\\
&~~\max_{\forall i \in \mathcal{M}}\left\{x_{i,j}N^b_{i,j}\right\}+\sum_{\forall i \in \mathcal{M}}\sum_{\forall k \in \mathcal{E}}{y_{i,j,k}N^e_{i,j,k}} \leq N_j, \nonumber\\
&~~~~~~~~~~~~~~~~~~~~~~~~~~~~~~~~~~~~~~~~~~~~~~~~~~\forall j \in \mathcal{S}, \nonumber\\
&~~x_{i,j} = \{0,1\}, \forall i \in \mathcal{M}, \forall j \in \mathcal{S}, \nonumber\\
&~~y_{i,j,k} \in [0,1], \;\; \forall i \in \mathcal{M}, \forall j \in \mathcal{S}, \forall k \in \mathcal{E}. \label{p1}
\end{align}

The optimization is over $x_{i,j}$, i.e. user-cell association, and $y_{i,j,k}$, i.e. the selection and resolution of the enhanced views to be transmitted to users, given which enhanced views 
each small cell has on its cache, and the available RBs (wireless bandwidth) for each cell. 
The first constraint in the above formulation is used to indicate that a user can only associate with one small cell.
The second constraint is to assure that the portion of an enhanced view can only be delivered when the user decides to associate with the small cell and the small cell has the enhanced view.
The third constraint is to assure that the total number of required RBs to transmit the basic views (first term in the constraint where broadcasting is used) and the enhanced views
at the select resolution levels (second term in the constraint where unicast is used) can not exceed the total number of RBs available to the small cell.

\begin{table}[t]
\renewcommand{\arraystretch}{1.2}
\caption{\textsc{Problem Formulation Notation}}
\vspace{-0.2cm}
\label{tab:notation}
\centering
\begin{tabularx}{\columnwidth}{|p{6.95cm}|l|}
\hline
\textbf{Description}&\textbf{Notation}\\
\hline
\hline
Decision variable for establishing connection between user ${i}$ and small cell ${j}$ &${x_{i,j}}$\\
\hline
Decision variable for determining the resolution in which user $i$ receives enhanced view $k$ from small cell ${j}$ & ${y_{i,j,k}}$\\
\hline
A set of VR users & ${\mathcal{M}}$\\
\hline
A set of small cells & ${\mathcal{S}}$\\
\hline
A set of enhanced views & ${\mathcal{E}}$\\
\hline
Indication if the desired enhanced view $k$ of user $i$ is in small cell $j$ & ${w_{i,j,k}}$\\
\hline
The maximum size of the cache in a small cell & $K$\\
\hline
Number of RBs for user ${i}$ to get the basic view from small cell $j$ &${N^b_{i,j}}$\\
\hline
Number of RBs for user ${i}$ to get enhanced view $k$ at highest resolution from small cell $j$ & ${N^e_{i,j,k}}$\\
\hline
Total number of RBs for small cell $j$ of each timeframe & ${N_j}$\\
\hline
\end{tabularx}
\end{table}

\subsection{Problem Analysis}
\subsubsection{Complexity Analysis}
We prove that the problem is NP-hard by reducing it to a 0-1 Multiple Knapsack Problem. 
\begin{definition}
(0-1 Multiple Knapsack Problem \cite{MKP}) Given a set of $n$ items and a set of $m$ knapsacks, let $f_v(j)$ be the profit of item $j$, $f_w(j)$ be the weight of item $j$, and $C_i$ be the capacity of knapsack $i$. The problem is to select $m$ disjoint subsets of items so that the total profit of the selected items is maximized, and each subset can be assigned to a corresponding knapsack whose capacity is no less than the total weight of the items in the subset. Formally,
\end{definition}
\vspace{-0.5cm}
\begin{align}
\max_{u_{i,j}} &~~\sum_{\forall i}\sum_{\forall j}{f_v(j)u_{i,j}} \nonumber \\
\text{subject to:} &~~\sum_{\forall i}\sum_{\forall j}{f_w(j)u_{i,j}} \leq C_i, \forall i \in \{1,...,m\} \nonumber \\
&~~\sum_{j}u_{i,j} \leq 1, \forall j \in \{1,...,n\} \nonumber \\
&~~u_{i,j} = \{0,1\}, \forall i, \forall j. \label{mkp}
\end{align}

\begin{lemma} \label{NP-hard}
Problem (\ref{p1}) is NP-hard.
\end{lemma}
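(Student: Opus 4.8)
The plan is to give a polynomial-time reduction \emph{from} the 0-1 Multiple Knapsack Problem \emph{to} Problem~(\ref{p1}); since 0-1 MKP is NP-hard, a polynomial-time algorithm for~(\ref{p1}) would yield one for MKP, proving the lemma. Starting from an MKP instance with $m$ knapsacks of capacities $C_1,\dots,C_m$ and $n$ items of weights $f_w(1),\dots,f_w(n)$ (for the core argument it already suffices to take unit profits $f_v(j)\equiv 1$, which stays NP-hard: deciding whether all $n$ items pack is exactly bin packing), I would build an instance of~(\ref{p1}) with $S=m$ small cells, $M=n$ users, and $E=n$ enhanced views, under the dictionary: cell $j$ has $N_j=C_j$; user $j$ desires only view $j$; every cell caches every view, so $w_{i,j,k}=1$ exactly when $k=j$ and the cache constraint holds with $K=n$; the basic-view cost is set to zero, $N^b_{i,j}=0$ (or to a constant with every $N_j$ pre-inflated by it, which neutralizes the awkward $\max_i\{x_{i,j}N^b_{i,j}\}$ term); and $N^e_{j,i,j}=f_w(j)$ for every cell $i$. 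Under this correspondence $x_{i,j}$ chooses which knapsack item $j$ is a candidate for, $y_{j,i,j}$ plays the role of the selection variable $u_{i,j}$ in~(\ref{mkp}), the objective $\sum_{i,j,k}y_{i,j,k}w_{i,j,k}$ reduces to $\sum_j y_{j,c_j,j}$ (with $c_j$ the cell of user $j$), and the RB constraint at a cell collapses to $\sum_{j:c_j=i}y_{j,i,j}f_w(j)\le C_i$, i.e.\ the knapsack-capacity constraint; the constraint $\sum_j x_{i,j}=1$ is harmless because a user may be associated with a cell yet receive $y=0$, mirroring leaving an item unselected.

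Next I would show that the two optimal values coincide. One direction is immediate: any feasible MKP selection lifts to a feasible $(x,y)$ with $y\in\{0,1\}$ and equal objective value. The converse is the subtle step, because $y_{i,j,k}$ is \emph{continuous} in $[0,1]$, so in principle~(\ref{p1}) could exploit fractional resolutions. To handle this I plan one of two routes: (i) phrase the source problem as an NP-complete \emph{decision} instance — e.g.\ bin packing or 3-partition — so that the answer is ``yes'' if and only if the optimum of the constructed instance equals $n$, and note that a strictly fractional $y$ can never make the objective reach $n$ since there are only $n$ users each contributing at most $1$; or (ii) give a per-cell exchange argument: for a fixed association, each cell faces an independent fractional knapsack whose greedy optimum takes the smallest-$N^e$ views first and splits at most one of them, and that single split view can be rounded to $0$ or $1$ without lowering the total, producing a binary optimal solution in bijection with an optimal MKP selection.

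The step I expect to be the real obstacle is exactly this control of the continuous variables. Indeed, once the association $x$ is fixed, Problem~(\ref{p1}) decouples into independent fractional knapsacks, which are solvable in polynomial time; hence \emph{all} the hardness must reside in the discrete association choice $x\in\{0,1\}^{M\times S}$, and one must ensure the fractional slack in the $y$'s cannot blur the underlying combinatorial packing decision. I expect this to be resolved cleanly through the decision-problem framing above, after which the remaining items — polynomial size of the construction, feasibility of the cache constraint, and the neutralization of the $\max$-based basic-view term — are routine verifications.
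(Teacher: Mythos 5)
Your overall strategy is the same as the paper's: both establish hardness by embedding the 0-1 Multiple Knapsack Problem (in your case via its unit-profit decision form, i.e.\ bin packing) into Problem~(\ref{p1}), exhibiting MKP as a special case; the paper does this by restricting parameters ($N^b_{i,j}=N^b$ for all $i,j$ and forcing $y_{i,j,k}=1$ whenever $x_{i,j}=1$) rather than by an explicit instance construction. Your route (i) is sound and in fact more careful than the paper's own sketch: framing the source as a decision problem so that the answer is ``yes'' if and only if the constructed optimum equals $n$ correctly neutralizes the continuous $y$ variables (the objective can reach $n$ only if every user's $y$ equals $1$, which forces an integral packing), and your remark that a user may be associated yet receive $y=0$ cleanly reconciles the equality constraint $\sum_{j} x_{i,j}=1$ in~(\ref{p1}) with the inequality $\sum_{i} u_{i,j}\le 1$ in~(\ref{mkp}), a mismatch the paper glosses over. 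However, route (ii) as stated is incorrect, and with it the blanket claim that the two \emph{optimization} optima coincide: for a fixed association each cell faces a fractional knapsack whose optimum can strictly exceed the 0-1 optimum (e.g.\ capacity $1$ and a single view with $N^e=2$ and reward $1$ yields fractional value $\tfrac{1}{2}$ but integral value $0$), so the single split view cannot in general be rounded to $0$ or $1$ without lowering the total. Drop route (ii) and the value-coincidence claim, and carry out route (i); the remaining items (zero basic-view cost neutralizing the $\max$ term, cache feasibility with $K=n$, polynomial size of the construction) are routine as you say.
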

\begin{proof}
We first show that a simplified version of Problem (\ref{p1}) is NP-hard and then argue that Problem (\ref{p1}) is also NP-hard.
Consider that $N^b_{i,j} = N^b, \forall i \in \mathcal{M}, \forall j \in \mathcal{S}$, thus obviously $\max\{N^b_{i,j}\} = N^b$.
In addition, consider that $\sum_{\forall k \in \mathcal{E}}{y_{i,j,k}N^e_{i,j,k}} = N^e_{i,j}, \forall i \in \mathcal{M}, \forall j \in \mathcal{S}$, and $\sum_{\forall k \in \mathcal{E}}{y_{i,j,k}w_{i,j,k}} = w_{i,j}, \forall i \in \mathcal{M}, \forall j \in \mathcal{S}$ with $y_{i,j,k} = 1$ if $x_{i,j} = 1$. 
We can transform our problem to the standard form of MKP by the following steps. 
a) The first constraint in Problem (\ref{p1}) is the same as the second constraint in Problem (\ref{mkp}) when $x_{i,j} = u_{i,j}$. 
b) The third constraint in Problem (\ref{p1}) can be simplified as: $\sum_{\forall i \in \mathcal{M}}\sum_{\forall j \in \mathcal{S}}{N^e_{i,j}x_{i,j}} \leq N_j - N^b$, which is the same as the first constraint in Problem (\ref{mkp}) with $x_{i,j} = u_{i,j}$, $C_i = N_j - N^b$, and $f_w(j) = N^e_{i,j}$. 
c) The objective function in Problem (\ref{p1}) can be simplified as: $\sum_{\forall i \in \mathcal{M}}\sum_{\forall j \in \mathcal{S}}{w_{i,j}x_{i,j}}$, which is the same as the objective function in Problem (\ref{mkp}) with $x_{i,j} = u_{i,j}$ and $f_v(j) = w_{i,j}$. From this, it is evident that the MKP is a special case of Problem (\ref{p1}). It is obvious that Problem (\ref{p1}) is more complicated than MKP because $N^b_{i,j}$ can be any positive number as well as there is one more decision variable $y_{i,j,k}$ involved in the problem. Since MKP is known to be NP-complete, Problem (\ref{p1}) is NP-hard.
\end{proof}
Since Problem (\ref{p1}) is NP-hard, there is no polynomial-time algorithm to solve this problem optimally. 

\subsubsection{Separability} 
\label{Separability}
When we fix $x_{i,j}$, the problem for each small cell $j$ becomes as follows:
\begin{align}
{U_j{(\mathcal{M}_j})}&~~ = \max_{y_{i,j,k}}{\sum_{\forall i \in \mathcal{M}_j}{\sum_{\forall k \in \mathcal{E}}{y_{i,j,k}w_{i,j,k}}}} \nonumber\\
\text{subject to:} &~~y_{i,j,k} \leq w_{i,j,k},  \;\; \forall i \in \mathcal{M}_j, \forall k \in \mathcal{E}, \nonumber\\
&~~\sum_{\forall i \in \mathcal{M}_j}\sum_{\forall k \in \mathcal{E}}{y_{i,j,k}N^e_{i,j,k}} \leq \overline{N}_j, \;\; \nonumber\\
&~~y_{i,j,k} \in [0,1], \;\; \forall i \in \mathcal{M}_j, \forall k \in \mathcal{E}, \label{p2}
\end{align}
where $\mathcal{M}_j = \{i \in \mathcal{M} ~ | ~ x_{i,j} = 1\}$ and $\overline{N}_j = N_j - \max_{\forall i \in \mathcal{M}_j}\left\{N^b_{i,j}\right\}$ 
(note that $x_{i,j}=1$ for all $i \in \mathcal{M}_j$). 
Therefore, Problem (\ref{p1}) is separable for each small cell $j$.
Note that Problem (\ref{p2}) is a convex optimization problem and is in a standard form of fractional Knapsack problem. Therefore, it can be easily solved by a standard greedy algorithm which 
picks the element with the highest value of $\frac{w_{i,j,k}}{N^e_{i,j,k}}$ iteratively until the second constraint becomes an equality. 

\subsubsection{Enhanced Views Multicasting} 
\label{Multicasting}
To model the case where some of the enhanced views can be multicast and some cannot, we change the third constraint in Problem (\ref{p1}) as follows:
\begin{align}
&\max_{\forall i \in \mathcal{M}}\left\{x_{i,j}N^b_{i,j}\right\} \nonumber \\
&~~+\sum_{\forall k \in \mathcal{E}}\left\{\max_{\forall i \in \mathcal{M}_k}\left\{y_{i,j,k}N^e_{i,j,k}\right\}+\sum_{\forall i \in \mathcal{M} \setminus \mathcal{M}_k}{y_{i,j,k}N^e_{i,j,k}}\right\} \nonumber\\
&~~\leq N_j, \forall j \in \mathcal{S},\nonumber
\end{align}
where $\mathcal{M}_k$ is the set of mobile users sharing enhance view $k$. 
The first term of the new constraint is the number of RBs required to broadcast the basic view, the second term is the number of RBs required to multicast enhanced view $k$ to a specific group of users, and the third term is the number of RBs required to unicast enhanced view $k$ to the rest of the users that wish to receive it. 
Clearly, multicasting some of the enhanced views can benefit the system because of the reduction of RBs usage, which we show via simulations later.
\section{Proposed Algorithms} 
\label{Algorithm}
We start with the optimal solution and then propose polynomial time approximation algorithms.
 
\subsection{Optimal Solution - the Branch-and-Bound Method}
\subsubsection{Problem Transformation}
Recall that the original problem can be decomposed into many small problems (Problem (\ref{p2})) when $x_{i,j}$ is fixed, and each small problem is a convex optimization problem (see Sec. \ref{Separability}).
Therefore, the remaining task is to determine the optimal value of $x_{i,j}$, i.e., the optimal small cell association for every user. 
We have the following binary integer programming:
\begin{align}
\max_{x_{i,j}}&~~\sum_{\forall j \in \mathcal{S}}{U_j(\mathcal{M}_j)} \nonumber\\
\text{subject to:} &~~\sum_{\forall j \in \mathcal{S}}x_{i,j} = 1, \forall i \in \mathcal{M}  \nonumber\\
&~~\mathcal{M}_j = \{i \in \mathcal{M} ~ | ~ x_{i,j} = 1\}  \nonumber\\
&~~x_{i,j} = \{0,1\}, \forall i \in \mathcal{M}, \forall j \in \mathcal{S}. \label{bb}
\end{align}

\subsubsection{Algorithm}
Since Problem (\ref{bb}) is a binary integer programming, we can use the Branch and Bound method \cite{BB} to find the optimal solution.
We define two new variables for this algorithm, $\mathcal{M}^{C} = \mathcal{M} - \cup_{\forall j \in \mathcal{S}} {\mathcal{M}_j}$ and $\mathcal{S}_i^{C} = \{j \in \mathcal{S} ~ | ~ j \text{ can be selected by } i\}$.
The algorithm proceeds as follows: 
\begin{itemize}
\item [1.] (Initialization) Set $\mathcal{M}_j = \emptyset, \forall j \in \mathcal{S}$, $\mathcal{M}^*_j = \emptyset, \forall j \in \mathcal{S}$, $\mathcal{S}_i^{C} = \mathcal{S}, \forall i \in \mathcal{M}$, and $\mathcal{V}^* = - \infty$.
\item [2.] (Branching)  Evaluate the potential 
\begin{equation}
\mathcal{P} = \sum_{\forall i \in \mathcal{M}^{C}}\left \{ \max_{\forall j \in \mathcal{S}^{C}} \left \{\sum_{\forall k \in \mathcal{E}}w_{i,j,k} \right \} \right \} \nonumber
\end{equation}
of this node. If $\mathcal{P} > \mathcal{V}^*-\sum_{\forall j \in \mathcal{S}}{U_j(\mathcal{M}_j)}$, set $x_{i^*,j^*} = 1$ for which $\{i^*,j^*\} = \arg\max_{i \in \mathcal{M}^{C}, j \in \mathcal{S}_i^{C}} \{\sum_{\forall k \in \mathcal{E}}w_{i,j,k}\}$ and $\mathcal{M}_{j^*} \leftarrow \{i^*\}$. Otherwise, go to Step 4. If the first constraint in Problem (\ref{bb}) is satisfied, i.e., $\sum_{j \in \mathcal{S}}x_{i,j} = 1, \forall i \in \mathcal{M}$, go to Step 3. Otherwise, repeat Step 2.
\item [3.] (Updating) Solve Problem (\ref{p2}) based on $\mathcal{M}_j$. If $\sum_{\forall j \in \mathcal{S}}{U_j(\mathcal{M}_j)} \geq \mathcal{V}^*$, set $\mathcal{V}^* = \sum_{\forall j \in \mathcal{S}}{U_j(\mathcal{M}_j)}$ and set $\mathcal{M}^*_j = \mathcal{M}_j$. Otherwise, go to Step 4.
\item [4.] (Backtracking) Choose the lastly joined user $\bar{i}$. Remove $\bar{i}$ from $\mathcal{M}_j$, remove $j$ from $\mathcal{S}_{\bar{i}}^{C}$, set $\mathcal{S}_i^{C} = \mathcal{S}, \forall i \in \mathcal{M}, \text{ where } i \neq \bar{i}$, and set $x_{\bar{i},j} = 0, \forall j \in \mathcal{S}$. If there is no lastly joined user, stop. Otherwise, go to Step 2.
\end{itemize}

The idea of this algorithm is efficiently going through all the user-association combinations by evaluating the potential $\mathcal{P}$ of the current state ($x_{i,j}$). 
If there exists a branch having potential to increase the value, the Branch-and-Bound method should proceed going through the user-association combinations based on the current state. 
Otherwise, if the result of the potential $\mathcal{P}$ is bounded by the current maximum value $V^*$, it is unnecessary to evaluate the rest of combinations based on the current state. 
In this case, the algorithm goes back to the previous state and tries to branch other possible combinations which have not been visited yet. 

\subsubsection{Performance Analysis}
We analyze the time complexity of the Branch-and-Bound method in the following lemma. 
\begin{lemma}
The time complexity of the Branch-and-Bound method is $O(|\mathcal{M}|^{|\mathcal{S}|}|\mathcal{S}||\mathcal{M}|\log(|\mathcal{M}|))$.
\end{lemma}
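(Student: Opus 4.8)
The plan is to write the running time of the Branch-and-Bound method as the number of nodes its search tree visits times the work performed at each node, and to bound these two quantities separately.

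\emph{Work per node.} At a node the algorithm runs Step 2: it evaluates the potential $\mathcal{P}$, which for each of the at most $|\mathcal{M}|$ still-unassigned users in $\mathcal{M}^{C}$ takes a maximum over the at most $|\mathcal{S}|$ cells in $\mathcal{S}_{i}^{C}$ of a sum of at most $|\mathcal{E}|$ indicators $w_{i,j,k}$; this, together with computing $\arg\max_{i,j}\sum_{k}w_{i,j,k}$ in the branching step, costs $O(|\mathcal{M}||\mathcal{S}||\mathcal{E}|)$. When the node corresponds to a complete association, Step 3 additionally solves Problem (\ref{p2}) on every cell; by Section \ref{Separability} each such subproblem is a fractional knapsack handled by the greedy ratio rule after sorting the $|\mathcal{M}_{j}||\mathcal{E}|$ ratios $w_{i,j,k}/N^{e}_{i,j,k}$, i.e. in $O(|\mathcal{M}_{j}||\mathcal{E}|\log(|\mathcal{M}_{j}||\mathcal{E}|))$ time, and since $\sum_{j}|\mathcal{M}_{j}|=|\mathcal{M}|$ these add up to $O(|\mathcal{M}||\mathcal{E}|\log(|\mathcal{M}||\mathcal{E}|))$. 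Treating $|\mathcal{E}|$ as the fixed small constant it is (e.g. $|\mathcal{E}|=8$ for $45^{\circ}$ enhanced views), the per-node cost is dominated by $O(|\mathcal{M}||\mathcal{S}|+|\mathcal{M}|\log|\mathcal{M}|)=O(|\mathcal{S}||\mathcal{M}|\log|\mathcal{M}|)$, which is exactly the polynomial factor in the statement.

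\emph{Number of nodes.} Here I would exploit the structure of the recursion: each downward move in Step 2 permanently fixes the cell of one additional user, so the search tree has depth at most $|\mathcal{M}|$, and a fixed user admits at most $|\mathcal{S}|$ cells, so each node has at most $|\mathcal{S}|$ children; Step 4's backtracking, together with deleting the tried pair from $\mathcal{S}_{\bar i}^{C}$ and resetting the other $\mathcal{S}_{i}^{C}$, makes the traversal a depth-first walk that reaches each node at most once. Hence, in the worst case where the test $\mathcal{P}>\mathcal{V}^{*}-\sum_{j}U_{j}(\mathcal{M}_{j})$ never prunes, the number of visited nodes is bounded by the number of user--cell association states the algorithm can enter, which is exponential; combining it with the per-node cost yields the claimed $O(|\mathcal{M}|^{|\mathcal{S}|}|\mathcal{S}||\mathcal{M}|\log|\mathcal{M}|)$.

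The step I expect to be the main obstacle is the node count: one must verify carefully that the bookkeeping with $\mathcal{M}^{C}$, the forbidden-cell sets $\mathcal{S}_{i}^{C}$, and the reset-on-backtrack rule indeed enumerate each association state at most once (so that the recursion is a genuine search tree rather than one that revisits states), and that the potential test can only delete nodes and never create them (so that the un-pruned enumeration is truly the worst case). Once the tree's depth and branching are pinned down, what remains is the elementary counting of the sorting and maximization steps carried out above.
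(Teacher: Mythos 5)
Your per-node accounting is fine and matches the paper's: the paper charges $O(|\mathcal{M}|\log(|\mathcal{M}|))$ per fractional-knapsack subproblem (Problem (\ref{p2})) and $O(|\mathcal{S}||\mathcal{M}|\log(|\mathcal{M}|))$ for all $|\mathcal{S}|$ cells, which is the same polynomial factor you arrive at after absorbing $|\mathcal{E}|$ into a constant. The genuine gap is exactly at the step you flagged as the main obstacle, and you do not close it: from your own tree structure (depth at most $|\mathcal{M}|$, since each downward move fixes one more user, and branching factor at most $|\mathcal{S}|$, since a user can try each cell once), the number of visited states is bounded by $|\mathcal{S}|^{|\mathcal{M}|}$, not $|\mathcal{M}|^{|\mathcal{S}|}$. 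Writing ``which is exponential'' and then asserting that this ``yields the claimed $O(|\mathcal{M}|^{|\mathcal{S}|}|\mathcal{S}||\mathcal{M}|\log|\mathcal{M}|)$'' is a non sequitur: $|\mathcal{S}|^{|\mathcal{M}|}$ is not $O(|\mathcal{M}|^{|\mathcal{S}|})$ in general (for the paper's small-scale setting $|\mathcal{M}|=50$, $|\mathcal{S}|=10$ one has $10^{50}\gg 50^{10}$), so your argument as written only supports the weaker bound $O(|\mathcal{S}|^{|\mathcal{M}|}\,|\mathcal{S}||\mathcal{M}|\log(|\mathcal{M}|))$ and does not establish the stated lemma.

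For comparison, the paper does not derive the node count from the tree's depth and branching at all; it simply asserts that in the worst case the method traverses ``every possible combination, which is $|\mathcal{M}|^{|\mathcal{S}|}$,'' and multiplies by the per-node cost. To make your proof deliver the stated exponent you would need an argument that the enumeration can be charged to at most $|\mathcal{M}|^{|\mathcal{S}|}$ distinct configurations (e.g., some equivalence of states the algorithm actually distinguishes), which neither your depth-times-branching count nor the reset-on-backtrack bookkeeping provides; absent that, the counting step fails and the claimed bound is not proved.
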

\begin{proof}
First of all, since Problem (\ref{p2}) is a fractional Knapsack problem, the time complexity to solve every subproblem (i.e., Problem (\ref{p2})) is $O(|\mathcal{M}|\log(|\mathcal{M}|))$. 
Moreover, since there are $|\mathcal{S}|$ subproblems, the time complexity of obtaining the value of $\sum_{\forall j \in \mathcal{S}}{U_j(\mathcal{M}_j)}$ in Step 2 in the algorithm is $O(|\mathcal{S}||\mathcal{M}|\log(|\mathcal{M}|))$.
Then, because the worst case of the algorithm is to traverse every possible combination, which is $|\mathcal{M}|^{|\mathcal{S}|}$, the time complexity of the algorithm is $O(|\mathcal{M}|^{|\mathcal{S}|}|\mathcal{S}||\mathcal{M}|\log(|\mathcal{M}|))$.
\end{proof}

\subsection{Submodular-Based Greedy Algorithm - ELVA}
In this section we introduce a $\rho(1-\frac{1}{e})$-approximation algorithm which we refer to as \textbf{E}fficient \textbf{L}ayered \textbf{V}ideo delivery \textbf{A}lgorithm (ELVA).
(Note that the value of $\rho$ is defined later in Lemma \ref{lemma_2}.)
Note that Problem (\ref{p1}) in Sec. \ref{Problem} is neither submodular nor monotone because of the term $\max_{\forall i \in \mathcal{M}}\left\{x_{i,j}N^b_{i,j}\right\}$ in the third constraint (i.e., multicasting the basic view). 
However, based on the layered structure of the problem setting (i.e., two-tier video), we can transform Problem (\ref{p1}) to a monotone submodular maximization problem
and use a standard greedy approach for monotone submodular maximization problems to solve it. 

\subsubsection{Problem Transformation}
We define the new monotone submodular maximization problem over a matroid constraint by replacing $U_j(\cdot)$ with $\widetilde{U}_j(\cdot)$ in Problem (\ref{bb}), where $\widetilde{U}_j(\cdot)$ is defined as follows:
\begin{align}
 {\widetilde{U}_j{(\mathcal{M}_j})} &~~ = \max_{y_{i,j,k}}{\sum_{\forall i \in \mathcal{M}_j}{\sum_{\forall k \in \mathcal{E}}{y_{i,j,k}w_{i,j,k}}}} \nonumber\\
\text{subject to:} &~~y_{i,j,k} \leq w_{i,j,k},  \;\; \forall i \in \mathcal{M}_j, \forall k \in \mathcal{E}, \nonumber\\
&~~\sum_{\forall i \in \mathcal{M}_j}\sum_{\forall k \in \mathcal{E}}{y_{i,j,k}N^e_{i,j,k}} \leq \widetilde{N}_j, \;\; \nonumber\\
&~~y_{i,j,k} \in [0,1], \;\; \forall i \in \mathcal{M}_j, \forall k \in \mathcal{E}. \nonumber
\end{align}
Note that $\widetilde{N}_j = N_j - \bar{N}^b$, where $\bar{N}^b$ is given by 
$\bar{N}^b = \frac{B}{R_{max\_min}}$,
and $R_{max\_min}$ is obtained by first finding 
$d_{max\_min} = \max_{i}{\{\min_j{\{d_{i,j}\}}\}}$
and then using Eq. (\ref{rate}) to find the corresponding $R_{max\_min}$.
Note that the above max-min problem can be solved in polynomial time by applying the divide-and-conquer algorithm for the closest pair of points problem \cite{Closest}.
Then, we have the following problem:
\begin{align}
\max_{x_{i,j}}&~~\sum_{\forall j \in \mathcal{S}}{\widetilde{U}_j(\mathcal{M}_j)} \nonumber \\
\text{subject to:} &~~\sum_{\forall j \in \mathcal{S}}x_{i,j} = 1, \forall i \in \mathcal{M}  \nonumber\\
&~~\mathcal{M}_j = \{i \in \mathcal{M} ~ | ~ x_{i,j} = 1\}  \nonumber\\
&~~x_{i,j} = \{0,1\}, \forall i \in \mathcal{M}, \forall j \in \mathcal{S}. \label{p3}
\end{align}

\subsubsection{Evaluation Function}
To greedily solve this problem as a monotone submodular maximization problem, we first introduce a so-called evaluation function. 
\begin{equation} \label{eva_func}
\Delta Q_{i,j}(\widehat{N}_j) = \min\{\bar{N}^b-N^b_{i,j}, 0\} \times T + G_{i,j}(\widehat{N}_j),
\end{equation}
where $T$ is an arbitrary big number, and $G_{i,j}(\widehat{N}_j)$ is obtained by solving the following linear programming, where $\widehat{N}_j$ is the current available RBs (this value will be updated at every iteration):
\begin{align}
G_{i,j}(\widehat{N}_j) &~~ = \max_{y_{i,j,k}}{\sum_{\forall k \in \mathcal{E}}{y_{i,j,k}w_{i,j,k}}} \nonumber\\
\text{subject to:} &~~\sum_{\forall k \in \mathcal{E}}{y_{i,j,k}N^e_{i,j,k}} \leq \widehat{N}_j, \;\; \nonumber\\
&~~y_{i,j,k} \in [0,1], \;\; \forall k \in \mathcal{E}.
\end{align}
We then use this function to select an element and update the value of the function iteratively until every element has been visited once.

\subsubsection{Algorithm}
The algorithm leverages the submodularity and monotonicity of Problem (\ref{p3}). 
With these two properties, we can directly apply the standard greedy algorithm for monotone submodular maximization problems, which always chooses the user-cell association with 
the maximal marginal value ($\Delta Q_{i,j}(\widehat{N}_j)$) based on the current available resources, see Algorithm \ref{ELVA} for more details.
\begin{algorithm}
	\caption{Submodular-Based Greedy Algorithm - ELVA}
	\begin{algorithmic}[1]
	\INPUT ${N^e_{i,j,k} \in \mathbb{R}}$, ${w_{i,j,k} = \{0,1\}}$, ${N_j \in \mathbb{R}}$, and ${N^b_{i,j} \in \mathbb{R}}$.
	\OUTPUT  $x_{i,j}$ and $y_{i,j,k}$.
	\State Initialize ${x_{i,j} = 0}$, ${y_{i,j,k} = 0}$, $\widehat{N}_j = N_j-\bar{N}^b$.
	\While {$\mathcal{M} \neq \emptyset$}
	\State Calculate $\Delta Q_{i,j}(\widehat{N}_j)$ for every $i$ and $j$.
	\State $x_{i,j} = 1$ with the highest value of  $\Delta Q_{i,j}(\widehat{N}_j)$.
	\State $i^* = i$ and $j^* = j$.
	\State Sort $\forall k \in \mathcal{E}$ by $\frac{w_{i^*,j^*,k}}{N^e_{i^*,j^*,k}}$ in descending order ($k^*$).
	\For {$\forall k^* \in \mathcal{E}$}
	\State $y_{i^*,j^*,k^*} =\min \left \{\frac{\widehat{N}_{j^*}}{N^e_{i^*,j^*,k^*}}, 1 \right \}$
	\State $\widehat{N}_{j^*} = \widehat{N}_{j^*} - y_{i^*,j^*,k^*} \times N^e_{i^*,j^*,k^*}$
	\EndFor
	\State $\mathcal{M} \leftarrow \mathcal{M} \setminus \{i^*\}$
	\EndWhile 
	\State Solve Problem (\ref{p2}) based on $x_{i,j}$.
	\end{algorithmic}
	\label{ELVA}
\end{algorithm} 

Note that since after Step 1 - Step 12 $\max\{x_{i,j}N^b_{i,j}\}$ might be smaller than $\bar{N}^b$, Step 13 helps allocate the remaining unused resources for maximizing the number of total rewards.
Also note that if there are two or more choices with the same value of $\Delta Q_{i,j}$, the algorithm will break the tie based on the best SINR.  

\subsubsection{Performance Analysis}
We establish an approximation ratio for ELVA.
First, we prove that Problem (\ref{p3}) is a monotone submodular maximization problem. 
Second, we prove that ELVA can always be better than any algorithms for Problem (\ref{p3}), and has the approximation ratio $\rho(1-\frac{1}{e})$. 
Then, we show that ELVA can be performed in polynomial time.

\begin{definition} \label{def_submodular}
(Submodularity \cite{Submodular}) A set function $f : 2^\mathcal{X} \rightarrow \mathbb{R}$ is submodular if, for all $\mathcal{A},\mathcal{B} \subseteq \mathcal{X}$ with $\mathcal{A} \subseteq \mathcal{B}$, and for all $i \in \mathcal{X} \setminus \mathcal{B}$,
$f(\mathcal{A} \cup \{i\}) - f(\mathcal{A}) \geq f(\mathcal{B} \cup \{i\}) - f(\mathcal{B})$.
\end{definition}

\begin{lemma} \label{thm_submodular}
$\widetilde{U}_j(\cdot)$ in Problem (\ref{p3}) is submodular.
\end{lemma}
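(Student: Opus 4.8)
The plan is to read $\widetilde{U}_j(\mathcal{M}_j)$ for what it is: the optimal value of a \emph{fractional knapsack} instance with a \emph{fixed} capacity $\widetilde{N}_j$ (this is exactly why $\overline{N}_j$ was replaced by the set-independent $\widetilde{N}_j$). Its ground set consists of the items $(i,k)$ with $i\in\mathcal{M}_j$, $k\in\mathcal{E}$ and $w_{i,j,k}=1$; item $(i,k)$ has profit $1$ and weight $N^e_{i,j,k}$, hence value density $1/N^e_{i,j,k}$. The structural fact I would exploit is that distinct users contribute disjoint items, so enlarging $\mathcal{M}_j$ merely appends a fresh, set-independent block of items.

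First I would record the standard integral (``layer-cake'') form of the greedy fractional-knapsack optimum. For a user set $\mathcal{S}$ put $W_\mathcal{S}(\theta)=\sum_{i\in\mathcal{S}}\sum_{k:\,w_{i,j,k}=1} N^e_{i,j,k}\,\mathbf{1}\!\left[1/N^e_{i,j,k}>\theta\right]$, the total weight of items of $\mathcal{S}$ whose density exceeds $\theta$. Filling items greedily by decreasing density until the budget is used yields
\[
\widetilde{U}_j(\mathcal{S}) \;=\; \int_0^{\infty}\min\!\left\{\widetilde{N}_j,\;W_\mathcal{S}(\theta)\right\}d\theta .
\]
Proving this identity --- that integrating the non-increasing rearrangement of the densities up to $\widetilde{N}_j$ equals the right-hand side --- is the one spot that needs care (handling ties among densities and strict-versus-weak inequalities), though it is routine; I expect this to be the main obstacle, everything afterwards being a short consequence.

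Indeed, let $\mathcal{A}\subseteq\mathcal{B}\subseteq\mathcal{M}$ and $i\notin\mathcal{B}$. Since the items of $\mathcal{A}$ form a subset of those of $\mathcal{B}$, $W_\mathcal{A}(\theta)\le W_\mathcal{B}(\theta)$ for all $\theta$; and since user $i$'s items are disjoint from both, $W_{\mathcal{A}\cup\{i\}}(\theta)=W_\mathcal{A}(\theta)+\nu(\theta)$ and $W_{\mathcal{B}\cup\{i\}}(\theta)=W_\mathcal{B}(\theta)+\nu(\theta)$ with the \emph{same} increment $\nu(\theta)=\sum_{k:\,w_{i,j,k}=1}N^e_{i,j,k}\,\mathbf{1}[1/N^e_{i,j,k}>\theta]\ge 0$. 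Hence
\[
\bigl(\widetilde{U}_j(\mathcal{A}\cup\{i\})-\widetilde{U}_j(\mathcal{A})\bigr)-\bigl(\widetilde{U}_j(\mathcal{B}\cup\{i\})-\widetilde{U}_j(\mathcal{B})\bigr)=\int_0^\infty\!\bigl[\psi_\theta(W_\mathcal{A}(\theta))-\psi_\theta(W_\mathcal{B}(\theta))\bigr]d\theta,
\]
where $\psi_\theta(a):=\min\{\widetilde{N}_j,a+\nu(\theta)\}-\min\{\widetilde{N}_j,a\}$. Because $a\mapsto\min\{\widetilde{N}_j,a\}$ is concave, $\psi_\theta$ is non-increasing in $a$; together with $W_\mathcal{A}(\theta)\le W_\mathcal{B}(\theta)$ this makes the integrand nonnegative pointwise, which is precisely the inequality of Definition \ref{def_submodular}. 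As an alternative that sidesteps the integral representation, one may add user $i$'s items one at a time and track how each single insertion shifts the sorted density profile (a horizontal segment of height equal to the new item's density inserted at the crossing point), but that route reproduces essentially the same bookkeeping.
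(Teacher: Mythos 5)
Your proof is correct, and it takes a genuinely different route from the paper's. The paper argues submodularity by a case analysis on how the added user $i$ does or does not ``affect the ranking'' of the greedy (density-ordered) solutions for $\mathcal{A}$ and $\mathcal{B}$, keeping track of the leftover budget $\widehat{N_j^{\mathcal{A}}}\geq\widehat{N_j^{\mathcal{B}}}$; it is short but informal, and it leans on the reader to fill in why the marginal gains compare correctly in each case. You instead observe that, once the capacity is frozen at the set-independent $\widetilde{N}_j$, the function $\widetilde{U}_j(\mathcal{S})$ is the fractional-knapsack optimum over a modular family of items, write it via the layer-cake identity
\begin{equation}
\widetilde{U}_j(\mathcal{S})=\int_0^{\infty}\min\bigl\{\widetilde{N}_j,\,W_\mathcal{S}(\theta)\bigr\}\,d\theta,\nonumber
\end{equation}
and then use that $W_\mathcal{S}(\theta)$ is additive in $\mathcal{S}$ (each user contributes a fixed, disjoint block of items) while $a\mapsto\min\{\widetilde{N}_j,a\}$ is concave and nondecreasing, so the composition is submodular (and, as a free by-product, monotone, which is the paper's Lemma \ref{thm_monotone}). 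This buys rigor and generality: the inequality of Definition \ref{def_submodular} follows pointwise under the integral with no bookkeeping about ties or which items the greedy selects, and the argument would survive heterogeneous per-item profits, whereas the paper's case enumeration is tied to the specific greedy ranking and is stated too loosely (e.g.\ its claim that the marginals are both zero when $i$ ``does not affect the ranking'' only makes sense once one fixes what affecting the ranking means for a saturated versus unsaturated budget). The one step you must actually prove, as you note, is the integral identity itself; your verification plan (non-increasing rearrangement of densities, ties being a measure-zero issue in $\theta$) is sound and routine.
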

\begin{proof}
To prove this, we need to show that $\widetilde{U}_j(\mathcal{A} \cup \{i\}) - \widetilde{U}_j(\mathcal{A}) \geq \widetilde{U}_j(\mathcal{B} \cup \{i\}) - \widetilde{U}_j(\mathcal{B})$ where $\mathcal{A} \subseteq \mathcal{B} \subseteq \mathcal{M}_j$ and $i \in \mathcal{M}_j \setminus \mathcal{B}$.
The proof is by illustrating all of the cases in the problem. 
Let $\widehat{N_j^{\mathcal{X}}} = \widetilde{N}_j - \sum_{i \in \mathcal{X}}\sum_{k \in \mathcal{E}}{y_{i,j,k}N^e_{i,j,k}}$. Given $\mathcal{A} \subseteq \mathcal{B}$, we have $\widehat{N_j^{\mathcal{A}}} \geq \widehat{N_j^{\mathcal{B}}}$. 
Case 1: Suppose $\widetilde{U}_j(\mathcal{A}) \leq \widetilde{U}_j(\mathcal{B})$. There exists an element $i \in \mathcal{M}_j \setminus \mathcal{B}$. If $i$ does not affect the ranking of both $\mathcal{A}$ and $\mathcal{B}$, we have $\widetilde{U}_j(\mathcal{A} \cup \{i\}) - \widetilde{U}_j(\mathcal{A}) = \widetilde{U}_j(\mathcal{B} \cup \{i\}) - \widetilde{U}_j(\mathcal{B}) = 0$. If $i$ affects the ranking of $\mathcal{A}$, but not the ranking of $\mathcal{B}$, we have $\widetilde{U}_j(\mathcal{A} \cup \{i\}) - \widetilde{U}_j(\mathcal{A}) \geq \widetilde{U}_j(\mathcal{B} \cup \{i\}) - \widetilde{U}_j(\mathcal{B}) = 0$. Note that the case when $i$ affects the ranking of $\mathcal{B}$, but not the ranking of $\mathcal{A}$ will not happen. 
Case 2: Suppose $\widetilde{U}_j(\mathcal{A}) > \widetilde{U}_j(\mathcal{B})$. We don't need to discuss this because it won't happen. 
Therefore, based on Definition \ref{def_submodular}, $\widetilde{U}_j(\cdot)$ in Problem (\ref{p3}) is submodular. 
\end{proof}

\begin{definition}
(Monotonicity \cite{Submodular}) A submodular function $f$ is monotone if for every $\mathcal{A} \subseteq \mathcal{B}$, we have that $f(\mathcal{A}) \leq f(\mathcal{B})$. 
\end{definition}

The following lemma follows directly from the fact that $\widetilde{U}_j(\cdot)$ is a linear function where $w_{i,j,k} \geq 0$.
\begin{lemma} \label{thm_monotone}
$\widetilde{U}_j(\cdot)$ in Problem (\ref{p3}) is monotone.
\end{lemma}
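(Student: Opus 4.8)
The plan is to prove monotonicity straight from the definition of $\widetilde{U}_j(\cdot)$: for any $\mathcal{A} \subseteq \mathcal{B} \subseteq \mathcal{M}_j$ we must show $\widetilde{U}_j(\mathcal{A}) \leq \widetilde{U}_j(\mathcal{B})$. The key idea is that the linear program defining $\widetilde{U}_j(\mathcal{B})$ is, after zero-padding, a relaxation of the one defining $\widetilde{U}_j(\mathcal{A})$: it has strictly more decision variables $y_{i,j,k}$ but the very same resource budget $\widetilde{N}_j = N_j - \bar{N}^b$, which by construction does not depend on the user set.

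Concretely, I would take an optimal solution $\{y^{\mathcal{A}}_{i,j,k}\}_{i \in \mathcal{A},\, k \in \mathcal{E}}$ of the $\mathcal{A}$-program and extend it to a candidate for the $\mathcal{B}$-program by keeping $y_{i,j,k} = y^{\mathcal{A}}_{i,j,k}$ for $i \in \mathcal{A}$ and setting $y_{i,j,k} = 0$ for every $i \in \mathcal{B} \setminus \mathcal{A}$. I would then verify feasibility of this extension for the $\mathcal{B}$-program: the box constraints $0 \leq y_{i,j,k} \leq w_{i,j,k}$ and $y_{i,j,k} \in [0,1]$ hold trivially on the new zero entries (here $w_{i,j,k} \geq 0$ is used), and the knapsack constraint $\sum_{i}\sum_{k} y_{i,j,k} N^e_{i,j,k} \leq \widetilde{N}_j$ is unchanged because the added terms contribute $0$. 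Finally, the objective value of the extended solution equals that of the $\mathcal{A}$-optimum, again because the added terms $y_{i,j,k} w_{i,j,k}$ vanish. Hence $\widetilde{U}_j(\mathcal{B})$ is at least the value of this feasible extension, i.e. $\widetilde{U}_j(\mathcal{B}) \geq \widetilde{U}_j(\mathcal{A})$, which is exactly monotonicity. Equivalently, and matching the remark in the text, $\widetilde{U}_j(\cdot)$ is the optimal value of a linear program with non-negative objective coefficients $w_{i,j,k}$ whose feasible region only grows as $\mathcal{M}_j$ grows, so the optimum is non-decreasing.

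I do not expect any real obstacle here; the argument is a one-paragraph padding-with-zeros observation. The only point worth an explicit sentence is why the transformed problem uses the fixed budget $\widetilde{N}_j$ rather than $\overline{N}_j = N_j - \max_{i \in \mathcal{M}_j} N^b_{i,j}$ from Problem (\ref{p2}): with $\overline{N}_j$, adding a user far from small cell $j$ could shrink the budget and thereby decrease the attainable reward, breaking monotonicity — which is precisely the reason Problem (\ref{p3}) replaces $U_j(\cdot)$ by $\widetilde{U}_j(\cdot)$ with a user-independent budget. Making that dependence explicit is all that is needed to turn the sketch into a complete proof.
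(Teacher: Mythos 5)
Your zero-padding argument is correct and is essentially the elaboration of the paper's own one-line justification, which simply notes that $\widetilde{U}_j(\cdot)$ is a linear program with nonnegative coefficients $w_{i,j,k}$ whose feasible set only grows with the user set. Your added remark about why the user-independent budget $\widetilde{N}_j$ (rather than $\overline{N}_j$) is what makes monotonicity hold is a nice explicit observation, but it does not change the approach.
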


From the above, Problem (\ref{p3}) is a monotone submodular maximization with a matroid constraint or knapsack constraints \cite{Submodular}.
Let \textit{OPT1} be the optimal value of Problem (\ref{p1}) and \textit{OPT2} be the optimal value of Problem (\ref{p3}).

\begin{lemma} \label{lemma_1}
ELVA $\geq (1-\frac{1}{e})$OPT2.
\end{lemma}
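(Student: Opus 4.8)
The plan is to present ELVA as the greedy algorithm for the monotone submodular maximization Problem~(\ref{p3}) and then to close with the classical greedy guarantee, which delivers the $1-\frac{1}{e}$ factor. Lemma~\ref{thm_submodular} and Lemma~\ref{thm_monotone} already supply the two structural facts needed — the objective $\sum_{j\in\mathcal{S}}\widetilde{U}_j(\mathcal{M}_j)$ of Problem~(\ref{p3}) is monotone and submodular — and the feasible set of Problem~(\ref{p3}) is the ``one cell per user'' constraint, which by monotonicity may be read as ``at most one cell per user''. What remains is to verify that Algorithm~\ref{ELVA} realises the greedy rule on this instance: each outer iteration should add the feasible pair $(i,j)$ of largest marginal value of $\sum_j\widetilde{U}_j(\mathcal{M}_j)$, and the outer loop should terminate having chosen one cell for every user.

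For the marginal-value identification I would note that assigning user $i$ to cell $j$ affects only the $j$-th summand of the objective, and that ELVA credits this move the value $G_{i,j}(\widehat{N}_j)$, where $\widehat{N}_j$ — the residual budget carried in Steps~1 and~9 — is exactly the residual of the conservative budget $\widetilde{N}_j=N_j-\bar{N}^b$ after the users admitted so far. The leading term $\min\{\bar{N}^b-N^b_{i,j},0\}\times T$ with $T$ arbitrarily large only blocks an association with a cell on which reserving $\bar{N}^b$ RBs for the basic view would be infeasible (i.e.\ $N^b_{i,j}>\bar{N}^b$), which is what keeps $\widetilde{N}_j$ a valid budget for every $\mathcal{M}_j$ ELVA can produce, since each user always retains at least its closest cell, for which $N^b_{i,j}\le\bar{N}^b$. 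Granting the verification deferred to the next paragraph that this credit coincides with the true marginal of $\widetilde{U}_j$, Step~4 is the greedy pick and Step~11 preserves feasibility; the standard analysis of greedy for monotone submodular maximization \cite{Submodular} then yields that the association ELVA returns, \emph{evaluated with the conservative budgets} $\widetilde{N}_j$, is worth at least $(1-\frac{1}{e})\,$\emph{OPT2}. Finally, since Step~13 re-solves Problem~(\ref{p2}) with the true budgets $\overline{N}_j\ge\widetilde{N}_j$, the value ELVA actually outputs is only larger, giving ELVA~$\ge(1-\frac{1}{e})\,$\emph{OPT2}.

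The main obstacle is exactly that deferred verification, because ELVA does not re-optimize a cell when a new user joins — it fills the newcomer's views myopically into the current residual (Steps~6--10) — whereas $\widetilde{U}_j(\cdot)$ is defined through the \emph{optimal} per-cell packing. One therefore has to argue that $G_{i,j}(\widehat{N}_j)$ is the correct marginal contribution to $\widetilde{U}_j$ rather than merely a lower bound on it; this is where the $\{0,1\}$-valued $w_{i,j,k}$ and the ranking argument already used in the proof of Lemma~\ref{thm_submodular} must be pushed further, to control how the optimal fractional-knapsack packing rearranges when an extra user is added. A second point requiring care is that the $1-\frac{1}{e}$ guarantee has to be invoked for the constraint that actually governs Problem~(\ref{p3}): over an arbitrary matroid plain greedy certifies only $\frac12$, so the argument must lean on the particular ``one choice per user'' structure together with the monotonicity established above. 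The residual bookkeeping, the SINR-based tie-break, and the termination of the ``while $\mathcal{M}\neq\emptyset$'' loop at a complete assignment are routine.
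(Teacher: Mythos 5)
Your proposal follows essentially the same route as the paper: the paper's entire proof of Lemma~\ref{lemma_1} is the observation that Problem~(\ref{p3}) is monotone submodular (Lemmas~\ref{thm_submodular} and~\ref{thm_monotone}) and that ELVA is the standard greedy algorithm for such problems, so the $(1-\frac{1}{e})$ bound is taken directly from \cite{Submodular}. The two verifications you defer --- that $G_{i,j}(\widehat{N}_j)$ is the exact marginal of $\widetilde{U}_j$ rather than a myopic lower bound, and that the $(1-\frac{1}{e})$ guarantee for plain greedy applies under the ``one cell per user'' (partition-matroid) constraint rather than the weaker $\frac{1}{2}$ bound for general matroids --- are real points of care, but the paper's proof does not address them either; it treats the citation as sufficient, so your write-up is, if anything, more explicit about where the argument leans on \cite{Submodular}.
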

\begin{proof}
Since ELVA uses the greedy algorithm for monotone submodular maximization problems in \cite{Submodular} and Problem (\ref{p3}) is a monotone submodular maximization problem from Lemma \ref{thm_submodular} and \ref{thm_monotone}, we can directly get the result from \cite{Submodular}. 
\end{proof}

\begin{lemma} \label{lemma_2}
$\rho$ OPT1 $\leq$ OPT2, where $\rho = \frac{\widetilde{N}_j}{N_j}$.
\end{lemma}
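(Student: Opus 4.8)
The plan is a one-shot rescaling argument: take an optimal solution of Problem (\ref{p1}), shrink its resolution variables by the factor $\rho$ so they satisfy the tighter per-cell budget of Problem (\ref{p3}), and observe that the objective drops by exactly this factor. Since OPT2 is a maximum, producing one feasible point of value $\rho\,\text{OPT1}$ lower-bounds it.

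In detail, let $(x^*_{i,j}, y^*_{i,j,k})$ attain OPT1 and fix a small cell $j$. First I would use the third constraint of Problem (\ref{p1}) at $j$, namely $\max_{i}\{x^*_{i,j}N^b_{i,j}\} + \sum_{i}\sum_{k} y^*_{i,j,k}N^e_{i,j,k} \le N_j$; dropping the nonnegative first term gives $\sum_{i}\sum_{k} y^*_{i,j,k}N^e_{i,j,k} \le N_j$. Next I would set $\tilde y_{i,j,k} = \rho_j\, y^*_{i,j,k}$ with $\rho_j = \widetilde N_j / N_j$, where $\rho = \min_j \rho_j$ (which coincides with $\widetilde N_j/N_j$ when the $N_j$ are equal, matching the statement), and I would assume $0 \le \bar N^b \le N_j$ so that $\rho_j \in [0,1]$ (otherwise $\widetilde N_j < 0$ and the claim is vacuous). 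Then $\sum_{i}\sum_{k}\tilde y_{i,j,k}N^e_{i,j,k} = \rho_j \sum_{i}\sum_{k} y^*_{i,j,k}N^e_{i,j,k} \le \rho_j N_j = \widetilde N_j$, so the budget constraint inside $\widetilde U_j$ holds. For any $i$ with $x^*_{i,j}=1$, the second constraint of Problem (\ref{p1}) gives $y^*_{i,j,k}\le w_{i,j,k}$, hence $\tilde y_{i,j,k} = \rho_j y^*_{i,j,k} \le \rho_j w_{i,j,k} \le w_{i,j,k}$ and $\tilde y_{i,j,k}\in[0,1]$; thus $(x^*,\tilde y)$ is feasible for Problem (\ref{p3}).

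Finally I would evaluate the objective. Since $y^*_{i,j,k}=0$ whenever $x^*_{i,j}=0$ (again the second constraint of Problem (\ref{p1})), the value of $(x^*,\tilde y)$ is $\sum_{i}\sum_{j}\sum_{k}\rho_j y^*_{i,j,k}w_{i,j,k} \ge \rho \sum_{i}\sum_{j}\sum_{k} y^*_{i,j,k}w_{i,j,k} = \rho\,\text{OPT1}$. Because each $\widetilde U_j(\mathcal M_j)$ already maximizes over $y$ under budget $\widetilde N_j$, the value of Problem (\ref{p3}) at association $x^*$ is at least the value of $(x^*,\tilde y)$, and OPT2 is at least that; hence $\rho\,\text{OPT1}\le\text{OPT2}$. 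There is no deep obstacle here — the argument is elementary — but two points need care: confirming that discarding the per-cell basic-view term still leaves the enhanced-view RB usage bounded by the \emph{full} budget $N_j$ (so that the $\rho_j$-scaling lands inside $\widetilde N_j$), and dealing with a possibly cell-dependent $\rho_j$, which is handled by taking $\rho=\min_j\widetilde N_j/N_j$ (or assuming homogeneous $N_j$).
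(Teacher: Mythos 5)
Your proposal is correct and takes essentially the same route as the paper: rescale the OPT1 solution by $\rho$ to obtain a feasible point of Problem (\ref{p3}) with objective value at least $\rho\,$OPT1, which lower-bounds OPT2. Your write-up is actually more careful than the paper's sketch---you verify the budget constraint by dropping the nonnegative basic-view term and scaling by $\rho_j=\widetilde{N}_j/N_j$ (taking $\rho=\min_j\rho_j$ for heterogeneous cells), whereas the paper additionally invokes $\max_i\{x_{i,j}N^b_{i,j}\}\le\bar{N}^b$, a claim your argument shows is not needed for this inequality.
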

\begin{proof}
Let $y_{i,j,k}^1$ be the solution obtained by \textit{OPT1}. 
There are two things that need to be proved. First, we have to prove that $\rho \times y_{i,j,k}^1$ is also a solution of Problem (\ref{p3}) which is easy to see directly.
Second, we have to prove that $\max\{x_{i,j}N^b_{i,j}\}$ from \textit{OPT1} is smaller than or equal to $\bar{N}^b$ which follows directly considering how we obtain $\bar{N}^b$  
(see Problem Transformation subsection above). 
\end{proof}

From Lemmas \ref{lemma_1} and \ref{lemma_2} it follows directly that:
\begin{theorem} 
ELVA is a $\rho(1-\frac{1}{e})$-approximation algorithm.
\end{theorem}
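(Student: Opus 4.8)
The plan is simply to chain the two lemmas just proved, after first checking that what Algorithm \ref{ELVA} actually returns is a feasible solution of Problem (\ref{p1}). Lines 1--12 of ELVA produce a user-cell association $x_{i,j}$ together with a tentative resolution assignment that consumes at most $\widetilde{N}_j = N_j - \bar N^b$ resource blocks at every cell $j$. Since $\bar N^b = B/R_{max\_min}$ was chosen precisely so that $\bar N^b \geq N^b_{i,j}$ for every $i$ and $j$ (the min-distance rate $R_{max\_min}$ is the worst over users of each user's best cell rate), we have $\bar N^b \geq \max_{i \in \mathcal{M}_j}\{x_{i,j}N^b_{i,j}\}$, so the third constraint of Problem (\ref{p1}) is met with room to spare; line 13 then re-solves the per-cell fractional knapsack Problem (\ref{p2}) over the \emph{exact} budget $\overline{N}_j = N_j - \max_{i \in \mathcal{M}_j}\{N^b_{i,j}\} \geq \widetilde{N}_j$, which can only redistribute the unused slack and therefore never decreases the objective. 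Hence the reward actually delivered by ELVA is at least the value of the submodular surrogate on the association $x_{i,j}$, i.e. the quantity bounded in Lemma \ref{lemma_1}.

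With feasibility and this monotonicity of line 13 in hand, I would invoke Lemma \ref{lemma_1}, which gives that the surrogate value is at least $(1-\frac{1}{e})\,$OPT2, where OPT2 is the optimum of the monotone submodular maximization Problem (\ref{p3}); then I would invoke Lemma \ref{lemma_2}, which gives OPT2 $\geq \rho\,$OPT1 with $\rho = \widetilde{N}_j/N_j$ and OPT1 the optimum of the original Problem (\ref{p1}). Stringing the three inequalities together gives
\[
\text{ELVA} \;\geq\; \Bigl(1-\frac{1}{e}\Bigr)\text{OPT2} \;\geq\; \rho\Bigl(1-\frac{1}{e}\Bigr)\text{OPT1},
\]
which is exactly the asserted guarantee.

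Since Lemmas \ref{lemma_1} and \ref{lemma_2} do all the real work, the only things left to nail down are bookkeeping items, and these are where I would be most careful. The first is the feasibility/no-loss argument for line 13 described above — it is the one place the construction of $\bar N^b$ via $d_{max\_min}$ is genuinely used, and it is the same fact Lemma \ref{lemma_2} leans on. The second is that $\rho = \widetilde{N}_j/N_j$ as written carries a cell index $j$ through $\widetilde{N}_j$ and $N_j$; to state a single uniform approximation ratio I would set $\rho = \min_{j}\widetilde{N}_j/N_j = 1 - \bar N^b/\min_j N_j$, and note that with this reading the chain of inequalities holds simultaneously for every cell, so the aggregated objective $\sum_j \widetilde{U}_j$ versus $\sum_j U_j$ inherits the bound. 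I expect no obstacle beyond making those two points precise.
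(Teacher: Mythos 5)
Your proposal is correct and follows essentially the same route as the paper: the paper's proof of the theorem is literally the one-line chaining of Lemma \ref{lemma_1} (ELVA $\geq (1-\frac{1}{e})$OPT2) with Lemma \ref{lemma_2} ($\rho\,$OPT1 $\leq$ OPT2). The extra bookkeeping you supply --- feasibility of the output after line 13 via $\bar N^b$, and reading $\rho$ as $\min_j \widetilde{N}_j/N_j$ so a single ratio applies across cells --- is sensible detail that the paper leaves implicit, not a different argument.
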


\begin{lemma}
ELVA is a polynomial-time algorithm.
\end{lemma}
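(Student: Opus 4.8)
\emph{Proof proposal.} The plan is to bound the running time of Algorithm \ref{ELVA} line by line and then multiply the per-pass cost of the \textbf{while} loop by the number of passes. First I would observe that the preprocessing required for Step 1 is polynomial: the quantity $d_{max\_min}=\max_{i}\{\min_{j}\{d_{i,j}\}\}$ is computed in polynomial time (as noted in the Problem Transformation subsection, and in any case a direct scan over all $i,j$ suffices in $O(|\mathcal{M}||\mathcal{S}|)$), $R_{max\_min}$ then follows from a single evaluation of Eq. (\ref{rate}), and hence $\widehat{N}_j=N_j-\bar{N}^b$ is available for every $j$ in polynomial time. Next I would argue that the \textbf{while} loop executes at most $|\mathcal{M}|$ times, since each pass deletes exactly one user $i^*$ from $\mathcal{M}$ in Step 11 and never reinserts it, so the loop halts once every user has been associated.

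The heart of the argument is Step 3, the evaluation of $\Delta Q_{i,j}(\widehat{N}_j)$ for all (at most) $|\mathcal{M}||\mathcal{S}|$ user--cell pairs. By Eq. (\ref{eva_func}), $\Delta Q_{i,j}$ is the sum of the trivially computable term $\min\{\bar{N}^b-N^b_{i,j},0\}\times T$ and $G_{i,j}(\widehat{N}_j)$. The step I expect to require the most care is to verify that $G_{i,j}(\widehat{N}_j)$ can itself be computed in polynomial time: it is a fractional knapsack instance over the $|\mathcal{E}|$ enhanced views, with value $w_{i,j,k}$, weight $N^e_{i,j,k}$, capacity $\widehat{N}_j$, and box constraints $y_{i,j,k}\in[0,1]$, so it is solved optimally by the standard greedy rule of sorting the views by $w_{i,j,k}/N^e_{i,j,k}$ and filling greedily, in $O(|\mathcal{E}|\log|\mathcal{E}|)$ time, with no recourse to a general linear-programming solver. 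Thus each $\Delta Q_{i,j}$ evaluation costs $O(|\mathcal{E}|\log|\mathcal{E}|)$ and Step 3 costs $O(|\mathcal{M}||\mathcal{S}||\mathcal{E}|\log|\mathcal{E}|)$, which dominates the rest of a pass: the $\arg\max$ in Step 4 is $O(|\mathcal{M}||\mathcal{S}|)$, the sort of $\mathcal{E}$ in Step 6 is $O(|\mathcal{E}|\log|\mathcal{E}|)$, the \textbf{for} loop of Steps 7--10 is $O(|\mathcal{E}|)$, the SINR tie-break is $O(|\mathcal{M}||\mathcal{S}|)$, and Step 11 is $O(|\mathcal{M}|)$.

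Putting the pieces together, multiplying the per-pass cost $O(|\mathcal{M}||\mathcal{S}||\mathcal{E}|\log|\mathcal{E}|)$ by the at most $|\mathcal{M}|$ passes gives $O(|\mathcal{M}|^2|\mathcal{S}||\mathcal{E}|\log|\mathcal{E}|)$ for the loop. Finally, Step 13 solves Problem (\ref{p2}) once; as observed in Sec. \ref{Separability} this is again a fractional knapsack for each of the $|\mathcal{S}|$ cells and so costs $O(|\mathcal{S}||\mathcal{M}||\mathcal{E}|\log(|\mathcal{M}||\mathcal{E}|))$, which is dominated by the loop. Summing all contributions, ELVA terminates in $O(|\mathcal{M}|^2|\mathcal{S}||\mathcal{E}|\log|\mathcal{E}|)$ time, a polynomial in the size of the input, which proves the lemma. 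The only real subtlety is the one flagged above --- that every optimization subproblem invoked by ELVA ($G_{i,j}$ in the evaluation function and Problem (\ref{p2}) in Step 13) is structurally a fractional knapsack and hence admits the fast greedy solution; the remaining bounds are elementary bookkeeping.
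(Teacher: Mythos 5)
Your proposal is correct and follows essentially the same route as the paper's proof: bound the cost of one pass of the greedy loop (evaluating $\Delta Q_{i,j}$ for all user--cell pairs and selecting the maximum) and multiply by the at most $|\mathcal{M}|$ iterations. The only difference is bookkeeping: you explicitly track the $O(|\mathcal{E}|\log|\mathcal{E}|)$ cost of each fractional-knapsack evaluation of $G_{i,j}(\widehat{N}_j)$, which the paper's stated bound $O(|\mathcal{S}||\mathcal{M}|^2+|\mathcal{S}||\mathcal{M}|^2\log(|\mathcal{S}||\mathcal{M}|))$ absorbs as a constant, so your analysis is if anything slightly more careful while reaching the same conclusion.
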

\begin{proof}
First, to calculate the evaluation function (\ref{eva_func}), it takes $O(|\mathcal{S}||\mathcal{M}|)$ computation. 
Then, to find the maximum value of (\ref{eva_func}), it requires the time complexity of $O(|\mathcal{S}||\mathcal{M}|\log(|\mathcal{S}||\mathcal{M}|))$.
The above two steps should be done for every mobile user (i.e., $|\mathcal{M}|$) so that the time complexity of ELVA is $O(|\mathcal{S}||\mathcal{M}|^2+|\mathcal{S}||\mathcal{M}|^2\log(|\mathcal{S}||\mathcal{M}|)).$ 
\end{proof}

\subsection{MKP-Based Greedy Algorithm - EVA}
The idea in this section is to use the greedy algorithm for MKP to solve our problem.
We first state the evaluation ratio to be used by the algorithm.

\subsubsection{Evaluation Ratio}
\begin{equation} \label{Eva_Ratio}
A_{i,j} = \frac{(\sum_{\forall k \in \mathcal{E}}w_{i,j,k})^p}{N^b_{i,j}}, 
\end{equation}
where $p$ is a term to tradeoff the importance of resource blocks represented by $N^b_{i,j}$ versus rewards represented by the weights $w_{i,j,k}$.
Note that in the performance evaluation section we will also consider the standard SINR-based greedy scheme, which can be obtained by simply setting $p$ to zero.

\subsubsection{Algorithm}
We introduce \textbf{E}fficient \textbf{V}ideo delivery \textbf{A}lgorithm (EVA), which is an algorithm that uses 
(\ref{Eva_Ratio}) to rank the choice for every user. Then, the algorithm greedily chooses user-association pairs based on the maximum value of (\ref{Eva_Ratio}) iteratively,
see Algorithm \ref{EVA} for more details.

\begin{algorithm}
	\caption{MKP-Based Greedy Algorithm - EVA}
	\begin{algorithmic}[1]
	\INPUT ${N^e_{i,j,k} \in \mathbb{R}}$, ${w_{i,j,k} = \{0,1\}}$, ${N_j \in \mathbb{R}}$, and ${N^b_{i,j} \in \mathbb{R}}$.
	\OUTPUT  $x_{i,j}$ and $y_{i,j,k}$.
	\State Initialize ${x_{i,j} = 0}$, ${y_{i,j,k} = 0}$, and $\check{N}^b_{j} = 0$.
	\State Sort ${\forall i \in \mathcal{S}}$ by ${A_{i,j}}$ in descending order.
	\For {${\forall i \in \mathcal{M}}$}
	\State $x_{i,j} = 1$ with the highest value of $A_{i,j}$.
	\State $\check{N}^b_{j} = \max\{N^b_{i,j}x_{i,j},\check{N}^b_{j}\}$
	\EndFor
	\For {${\forall j \in \mathcal{S}}$}
	\State $N_{j} = N_{j} - \check{N}^b_{j}$
	\EndFor
	\For {${\forall i \in \mathcal{M}}$}
	\State Sort $\forall k \in \mathcal{E}$ by $\frac{w_{i,j,k}}{N^e_{i,j,k}}$ in descending order.
	\For {$\forall k \in \mathcal{E}$}
	\State $y_{i,j,k} =\min \left \{\frac{N_{j}}{N^e_{i,j,k}}, 1 \right \}$
	\State $N_{j} = N_{j} - y_{i,j,k} \times N^e_{i,j,k}$
	\EndFor
	\EndFor
	\end{algorithmic}
	\label{EVA}	
\end{algorithm} 

\subsubsection{Performance Analysis}
The following result about EVA's approximation ratio follows directly from the results in \cite{GAP:Shmoys}.
\begin{theorem} 
EVA is a $\frac{1}{2}$-approximation algorithm.
\end{theorem}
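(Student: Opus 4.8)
The plan is to exhibit EVA as the classical greedy algorithm for the maximization version of the separable assignment / $0$-$1$ Multiple Knapsack Problem and to import its $\tfrac12$ guarantee from \cite{GAP:Shmoys}. Starting from the reformulation (\ref{bb}), I would view the decision EVA makes as assigning each user (item) to exactly one small cell (bin), where the value accrued by cell $j$ from the set $\mathcal{M}_j$ of users it serves is the fractional-knapsack value $U_j(\mathcal{M}_j)$ of (\ref{p2}). This is exactly the ``single-bin subproblem'' of a separable-assignment instance, and since (\ref{p2}) is an ordinary fractional knapsack it is solved \emph{optimally} in polynomial time; in the terminology of \cite{GAP:Shmoys} this is a $\beta$-approximation of the single-bin problem with $\beta = 1$, for which the natural greedy---repeatedly commit the most profitable remaining assignment---returns at least $\tfrac{\beta}{1+\beta} = \tfrac12$ of the optimum. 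I would then verify that Algorithm~\ref{EVA} is precisely this greedy: Steps~2--6 order the candidate $(i,j)$ pairs by the ratio $A_{i,j}$ of (\ref{Eva_Ratio}) and commit them, while Steps~10--16 carry out, per cell, the fractional packing of enhanced views that defines $U_j$, so the reward EVA reports equals $\sum_{j\in\mathcal{S}}U_j(\mathcal{M}_j)$ evaluated at the greedily chosen association.

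With the reduction set up, the remaining checks are light. I would confirm that the ``exactly one cell per user'' requirement in (\ref{bb}) is the partition constraint of the separable-assignment model; that $A_{i,j}$, for any fixed exponent $p \ge 0$, is a legitimate ranking in the sense the greedy analysis needs, since raising the profit $\sum_k w_{i,j,k}$ to a fixed power preserves the relative order of assignments of equal weight and the ``keep the single best assignment'' safeguard is inherited; and that the fractional fill-up in Steps~10--16 only adds reward on top of any integral assignment and hence cannot degrade the approximation ratio. Combining these with the $\tfrac12$ bound of \cite{GAP:Shmoys} then gives $\mathrm{EVA} \ge \tfrac12\,\mathrm{OPT1}$.

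The step I expect to be the genuine obstacle is the non-additive broadcast term $\max_{\forall i\in\mathcal{M}}\{x_{i,j}N^b_{i,j}\}$ in the third constraint of (\ref{p1}): a separable-assignment bin is supposed to carry a fixed capacity, whereas here the RBs consumed by the basic view in cell $j$---and thus the residual capacity $\overline{N}_j$ left for the enhanced layers---depend on which users are assigned to $j$, so the bin ``capacities'' are content-dependent and are only known after the assignment is fixed (which is why EVA subtracts the realized cost $\check{N}^b_j$ in Steps~7--9 rather than up front). I would absorb this by a cell-by-cell charging argument directly against an optimal solution: every broadcast RB that EVA spends in cell $j$ is also spent there by the optimum on the users it serves, so the comparison of greedily committed profit against optimal profit can be made bin by bin exactly as in \cite{GAP:Shmoys}, with the $\max$ term contributing the same offset on both sides. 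Should a more self-contained route be preferred, one can instead replace $\check{N}^b_j$ with the cell-independent upper bound $\bar{N}^b$ used in the ELVA transformation, run the greedy on those honest fixed capacities, and recover the slack in the fill-up phase; either way the additive-capacity hypothesis underlying the $\tfrac12$ bound is restored without loss.
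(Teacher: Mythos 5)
The route you take is exactly the paper's: its entire proof is the one sentence that the bound ``follows directly from the results in \cite{GAP:Shmoys}'', so your framing of users as items, cells as bins, the per-cell fractional knapsack (Problem (\ref{p2})) as an exactly solvable single-bin subproblem ($\beta=1$), and the $\tfrac{\beta}{1+\beta}=\tfrac12$ greedy guarantee is just the implicit content of that citation spelled out. Where you go beyond the paper is in flagging the two mismatches with the separable-assignment model, and that is where the genuine gaps sit --- your patches do not close them. First, Algorithm \ref{EVA} is not the greedy analyzed in \cite{GAP:Shmoys}: it assigns every user in a single capacity-oblivious pass to the cell maximizing the static ratio $A_{i,j}=(\sum_k w_{i,j,k})^p/N^b_{i,j}$, which involves the basic-view cost and an arbitrary exponent $p$ and never looks at the enhanced-view costs $N^e_{i,j,k}$ or at residual RBs, whereas the $\tfrac12$ guarantee is proved for an adaptive greedy that repeatedly commits the assignment of maximal (approximately) optimal marginal value given the remaining capacity. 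Ordering by $A_{i,j}$ is not order-equivalent to ordering by marginal value, so ``confirm that $A_{i,j}$ is a legitimate ranking'' is not a light check --- it is precisely the step that fails, and neither you nor the paper supplies an argument for it.

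Second, your charging argument for the broadcast term is incorrect as stated: the optimum generally serves a \emph{different} set of users at cell $j$, so the RBs it spends broadcasting there, $\max_i\{x_{i,j}N^b_{i,j}\}$ taken over \emph{its} users, can be strictly smaller than what EVA pays (EVA may attach a distant user to $j$ and inflate the broadcast cost), so the offsets do not cancel bin by bin. Your fallback --- provisioning the uniform bound $\bar{N}^b$ up front --- does restore fixed bin capacities, but then the comparison is against the shrunken capacity $\widetilde{N}_j=N_j-\bar{N}^b$ rather than $N_j$, which is exactly the source of the factor $\rho=\widetilde{N}_j/N_j$ in the ELVA analysis (Lemma \ref{lemma_2}); carried through honestly this yields a bound of the form $\rho\cdot\tfrac12$ against OPT1, not $\tfrac12$, so ``the additive-capacity hypothesis is restored without loss'' is not justified. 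In short, you have correctly identified the obstacles that the paper's one-line appeal to \cite{GAP:Shmoys} glosses over, but the proposed repairs (the bin-by-bin charging and the claim that the $A_{i,j}$ ranking inherits the greedy analysis) would not survive scrutiny, so the proposal as written does not yet constitute a proof.
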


\begin{lemma}
EVA is a polynomial-time algorithm.
\end{lemma}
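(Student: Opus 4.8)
The plan is to mirror the complexity analysis already carried out for ELVA: walk through Algorithm~\ref{EVA} block by block, bound the cost of each, and sum. First I would account for the evaluation ratios: computing $A_{i,j}$ from Eq.~(\ref{Eva_Ratio}) for one candidate pair $(i,j)$ requires forming the inner sum $\sum_{\forall k \in \mathcal{E}} w_{i,j,k}$, which is $O(|\mathcal{E}|)$, followed by an exponentiation by the fixed constant $p$ and one division, both $O(1)$; over all $|\mathcal{M}||\mathcal{S}|$ pairs this is $O(|\mathcal{M}||\mathcal{S}||\mathcal{E}|)$. Sorting these $O(|\mathcal{M}||\mathcal{S}|)$ values then takes $O(|\mathcal{M}||\mathcal{S}|\log(|\mathcal{M}||\mathcal{S}|))$.

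Next I would bound the user-cell association loop: for each of the $|\mathcal{M}|$ users the algorithm scans its $|\mathcal{S}|$ candidate cells to select the one with the largest $A_{i,j}$ and updates $\check{N}^b_j$, which is $O(|\mathcal{M}||\mathcal{S}|)$ overall; the subsequent pass over the $|\mathcal{S}|$ cells that subtracts $\check{N}^b_j$ from $N_j$ is $O(|\mathcal{S}|)$. Finally, the resolution-allocation loop solves, for each user $i$ and its assigned cell $j$, a fractional-knapsack instance over the $|\mathcal{E}|$ enhanced views: sorting $\mathcal{E}$ by $\frac{w_{i,j,k}}{N^e_{i,j,k}}$ costs $O(|\mathcal{E}|\log|\mathcal{E}|)$ and the greedy fill is $O(|\mathcal{E}|)$, so this block is $O(|\mathcal{M}||\mathcal{E}|\log|\mathcal{E}|)$; its correctness is the standard fractional-knapsack argument already used in Section~\ref{Separability}.

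Adding the pieces, EVA runs in $O(|\mathcal{M}||\mathcal{S}||\mathcal{E}| + |\mathcal{M}||\mathcal{S}|\log(|\mathcal{M}||\mathcal{S}|) + |\mathcal{M}||\mathcal{E}|\log|\mathcal{E}|)$ time, which is polynomial in $|\mathcal{M}|$, $|\mathcal{S}|$ and $|\mathcal{E}|$; hence EVA is a polynomial-time algorithm. I do not expect a genuine obstacle here, since the argument is pure bookkeeping. The only points needing a little care are (i) that the sort ranks the $O(|\mathcal{M}||\mathcal{S}|)$ candidate associations, so the logarithmic factor is $\log(|\mathcal{M}||\mathcal{S}|)$ rather than $\log|\mathcal{S}|$, and (ii) that the $O(|\mathcal{E}|)$ cost of the inner sum defining $A_{i,j}$ must be charged to every pair, since that term, together with the two sorts, is what dominates the bound.
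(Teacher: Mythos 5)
Your proof is correct and takes essentially the same approach as the paper: a block-by-block accounting of Algorithm~\ref{EVA}, bounding the ratio computation, the sort/selection of the $A_{i,j}$, and the per-user fractional-knapsack allocation, then summing to a polynomial bound. The only difference is that you track the $|\mathcal{E}|$ factors explicitly, whereas the paper's coarser statement ($O(|\mathcal{S}||\mathcal{M}|\log(|\mathcal{S}||\mathcal{M}|))$, with ``other steps require only linear time'') suppresses them; both arguments establish polynomial running time.
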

\begin{proof}
To calculate the evaluation ratio (\ref{Eva_Ratio}) it takes $O(|\mathcal{S}||\mathcal{M}|)$ computation. 
Then, to find the maximum value of (\ref{Eva_Ratio}), the time complexity is $O(|\mathcal{S}||\mathcal{M}|\log(|\mathcal{S}||\mathcal{M}|))$.
The other steps require only linear time. Putting it all together, the time complexity of EVA is $O(|\mathcal{S}||\mathcal{M}|\log(|\mathcal{S}||\mathcal{M}|))$.
\end{proof}
\section{Performance Evaluation} 
\label{Simulation}
In this section, we compare different algorithms in a small-scale and a large-scale network to study their performance.
We use the following legends to label various algorithms: ``Optimal" for the optimal solution obtained by CVX \cite{CVX}, ``BB" for the Branch-and-Bound Method, ``ELVA" for the submodular-based greedy algorithm, ``EVA" for the MKP-based greedy algorithm (with a default value of $p$ equal to one), and ``SINR" for the SINR-based greedy algorithm.  

%
\begin{table} 
\caption{\textsc{Settings for Simulation Experiments}}
\vspace{-0.2cm}
\label{tab:small}
\begin{tabular}{|p{4.5cm}|l|l|}
\hline
\multirow{2}{*}{Description}    					& \multicolumn{2}{l|}{Notation} \\ \cline{2-3}
                                               					& Small-scale   & Large-scale  \\ \hline \hline
Default number of VR users      					&  ${|\mathcal{M}|=50}$ & ${|\mathcal{M}|=500}$  \\ \hline
Default number of small cells              					&  ${|\mathcal{S}|=10}$ & ${|\mathcal{S}|=100}$ \\ \hline
Default number of enhanced views              		&  ${|\mathcal{E}|=5}$ & ${|\mathcal{E}|=20}$ \\ \hline
Total number of RBs for small cell $j$ of each timeframe 	& \multicolumn{2}{l|}{${N_j = 50,000}$}         \\ \hline
Size of a basic view                           				& \multicolumn{2}{l|}{${B = 2}$ Mb}         \\ \hline
Size of an enhanced view                       			& \multicolumn{2}{l|}{${E_{k} = 2}$ Mb}         \\ \hline
Carrier frequency                              				& \multicolumn{2}{l|}{$f_c = 5$ GHz}         \\ \hline
Transmission power                            	 		& \multicolumn{2}{l|}{${P_t = 1}$ Watt}         \\ \hline
Noise power                                    				& \multicolumn{2}{l|}{$\mathcal{N} = -174$ dBm/Hz}         \\ \hline
Map Range                                      				& \multicolumn{2}{l|}{${1,000}$ m}  \\ \hline
\end{tabular}
\end{table}

\subsection{Simulation Settings}
We consider a network of small cells and VR users in a circle with a radius of 1,000 m. 
The path loss for the following simulations is based on WINNER-II model. According to this model, $g_{ij}$ is given in dB from the formula below:
\begin{equation}
g_{i,j}(d_{i,j}) = a \times \log_{10}(d_{i,j}) + b + c \times \log_{10}(f_c/5) + X,
\end{equation}
where $a$, $b$, $c$, and $X$ are parameters related with scenarios which can be found in \cite{WINNER-II}. 
The carrier frequency ($f_c$) is 5 GHz, the transmit power of a small cell ($P_t$) is 1 Watt, and the background noise power ($\mathcal{N}$) is assumed to be -174 dBm/Hz.
We assume every small cell has 50,000 RBs per second, where one RB is 180 KHz $\times$ 0.5 ms, and the total available bandwidth of the system is 100 MHz. 
The above parameters are set according to \cite{3GPP:36.814, Vook:Asilomar2018, Huang:ICC2016, Huang:TVT2017}. 
We assume that $w_{i,j,k} = 1, \forall i,j,k$, the number of enhanced views complementing a basic view $|\mathcal{E}|$ varies between 5 and 20, and the cache size $K$ default value is $\ceil{\frac{|\mathcal{E}|}{2}}$.
We apply a caching placement scheme based on \cite{FemtoCaching, Ao:TMC2018} to allocate enhanced views to small cell caches. The main idea of the scheme if to allocate enhanced views based on 
long term statistics about what users tend to request depending on their current position while guaranteeing that each enhanced view will be present in at least one cache. 
The size of a basic view is 2 Mb ($\sim$2.3Mb) and the size of an enhanced view is 2 Mb (for every $90^{\circ}$) based on \cite{Petrangeli:MM2017,Colman-Meixner:TB2019} and the recommendation of YouTube \cite{YouTube} for one-second 2K video and 4K 360 video, respectively.
Last, the default number of small cells is, enhanced views and VR users depends on the considered scenarios.
TABLE \ref{tab:small} summarizes the simulation parameters.

\begin{figure} 
\centering
  \begin{subfigure}[b]{0.5\columnwidth}
        \centering
        \includegraphics[scale=0.245]{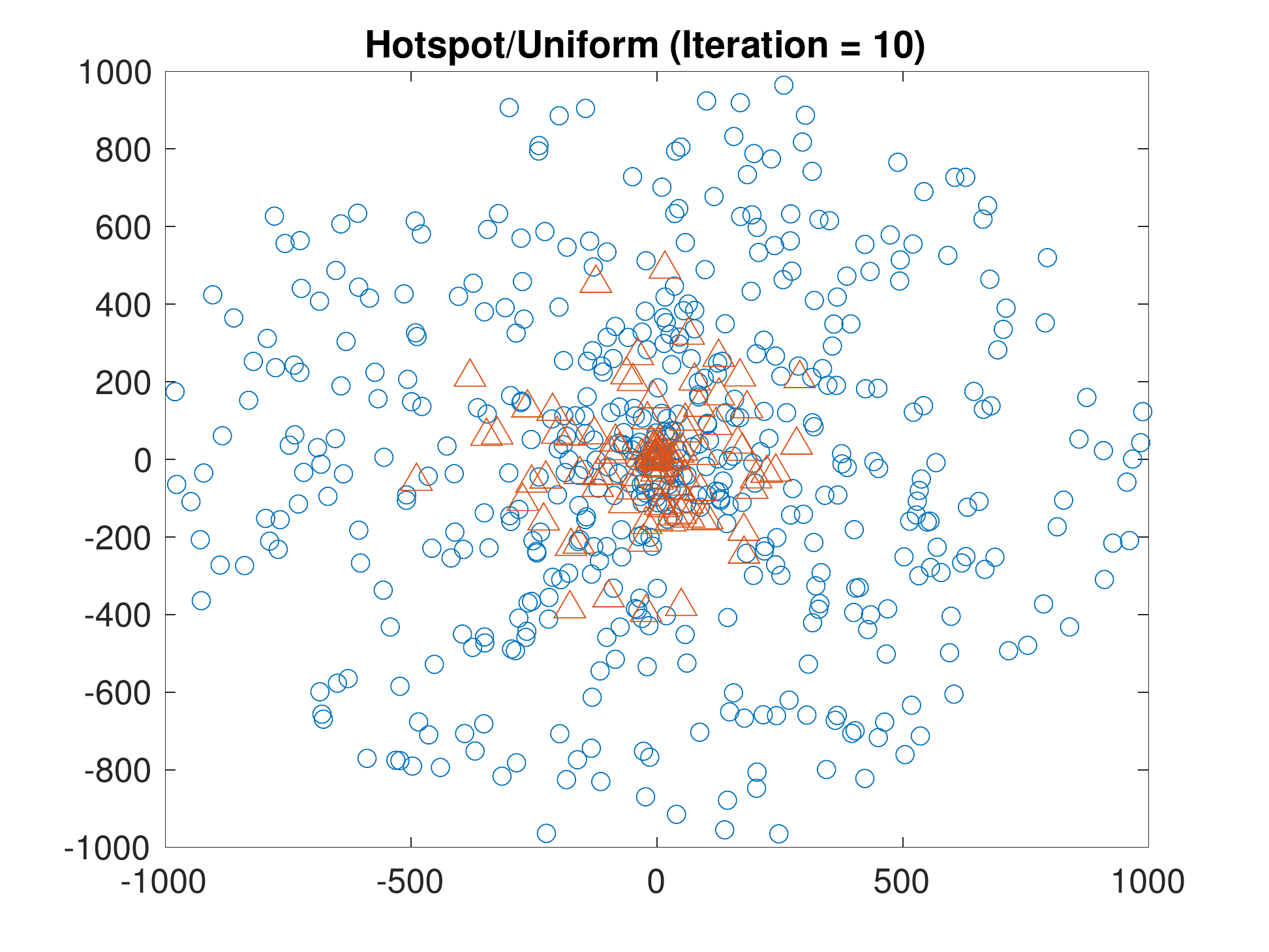}
        \caption{Hotspot/Uniform}
    \end{subfigure}%
    ~ 
    \hspace{-0.3cm}
    \begin{subfigure}[b]{0.5\columnwidth}
        \centering
        \includegraphics[scale=0.245]{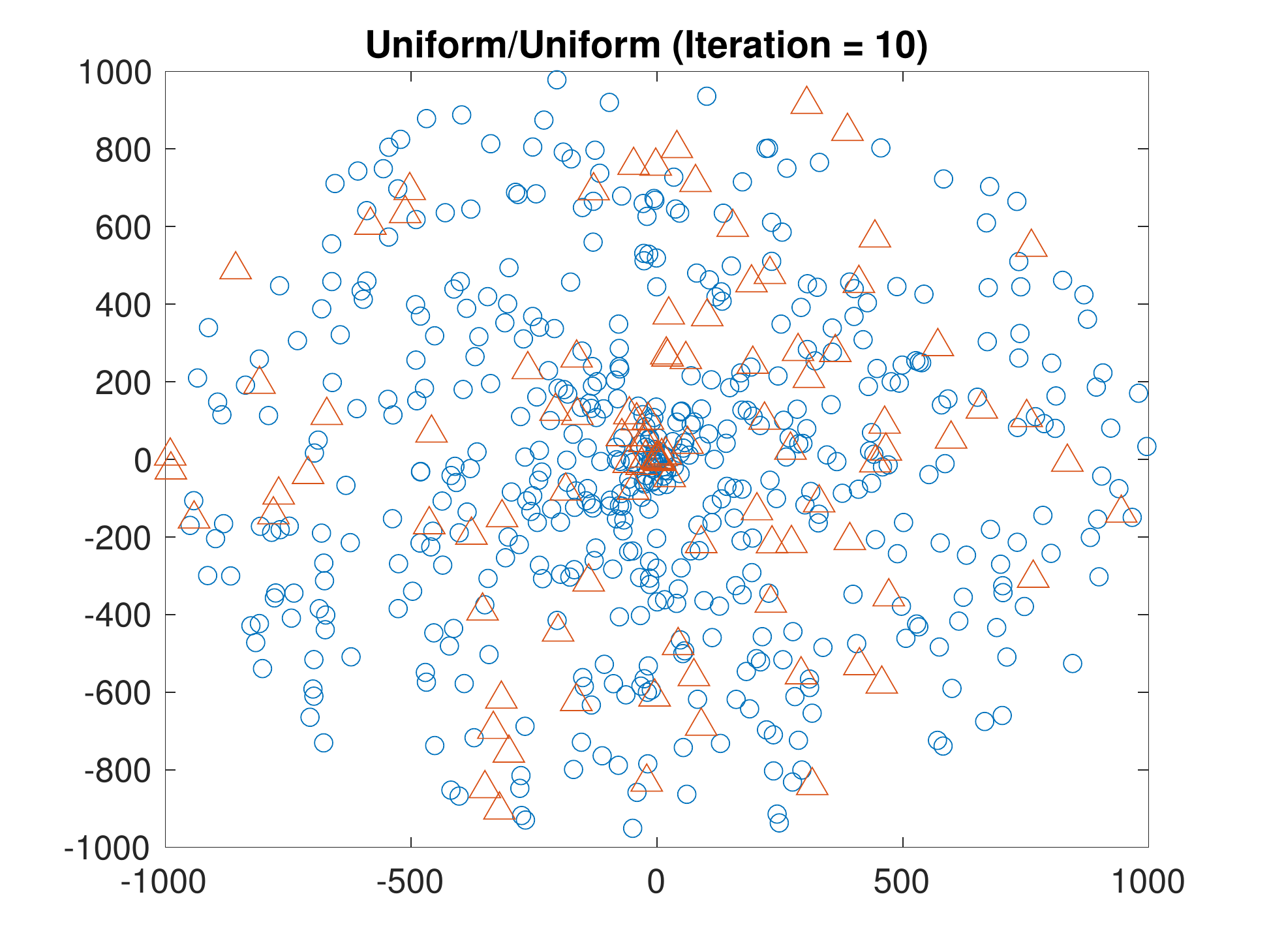}
        \caption{Uniform/Uniform}
    \end{subfigure}
\caption{Scenarios for small-scale study.}
\label{fig:scenarios}
\centering
\end{figure}
\begin{figure}
\centering
  \begin{subfigure}[b]{0.5\columnwidth}
        \centering
        \includegraphics[scale=0.24]{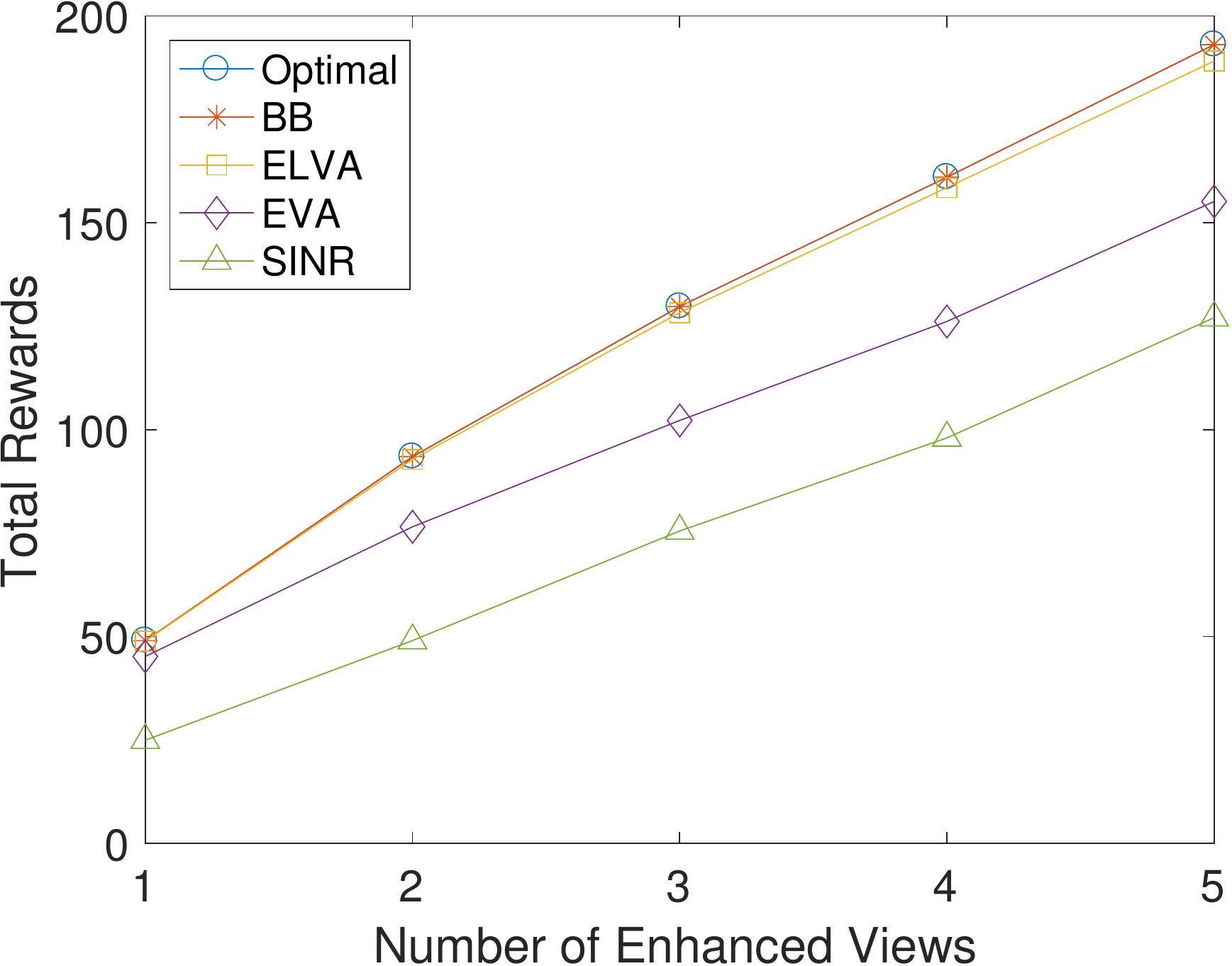}
        \caption{Hotspot/Uniform}
    \end{subfigure}%
    ~ 
    \begin{subfigure}[b]{0.5\columnwidth}
        \centering
        \includegraphics[scale=0.24]{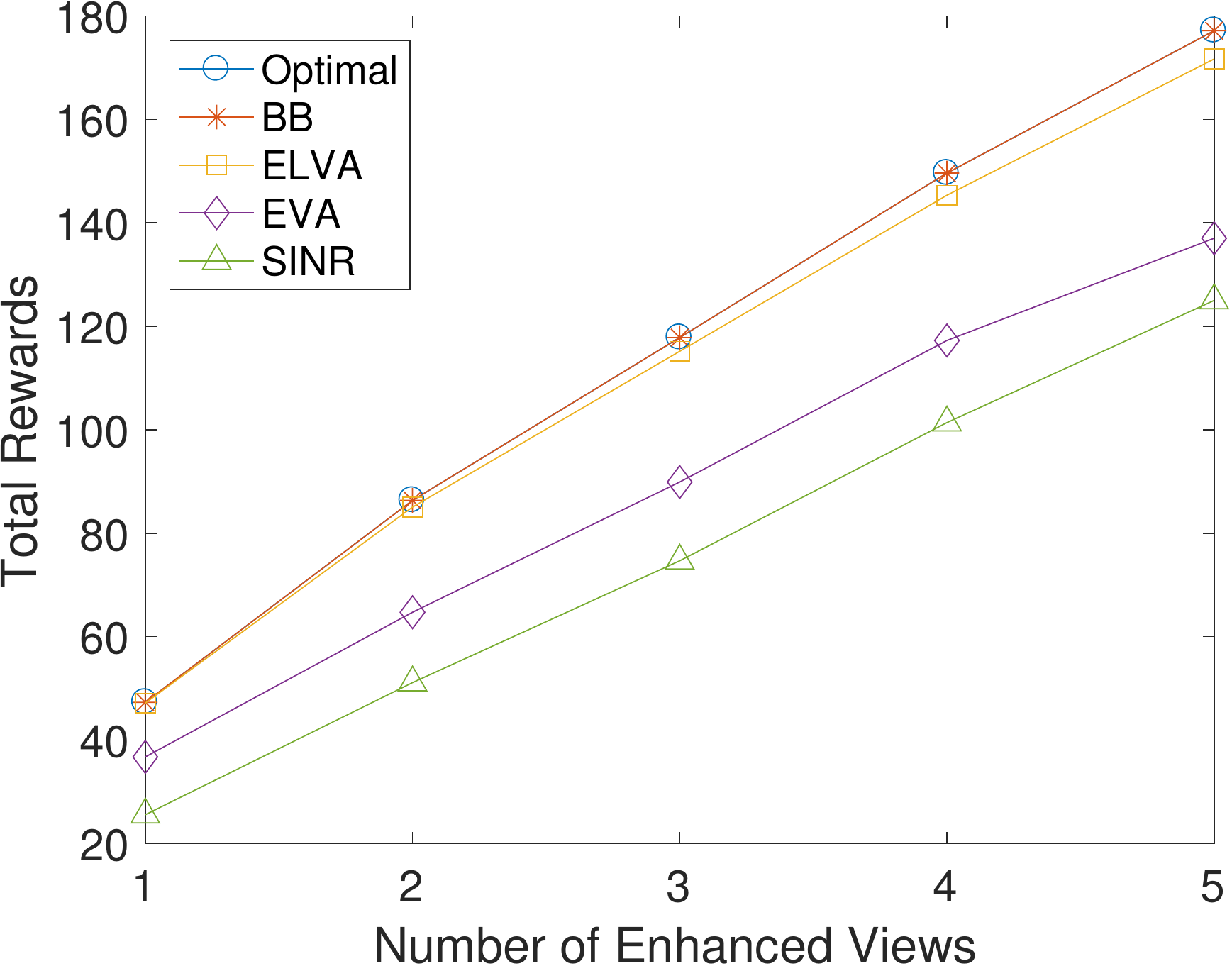}
        \caption{Uniform/Uniform}
    \end{subfigure}
\caption{Total rewards versus number of enhanced views.}
\label{fig:EVA_E}
\centering
\end{figure}
\begin{figure}
\centering
  \begin{subfigure}[b]{0.5\columnwidth}
        \centering
        \includegraphics[scale=0.24]{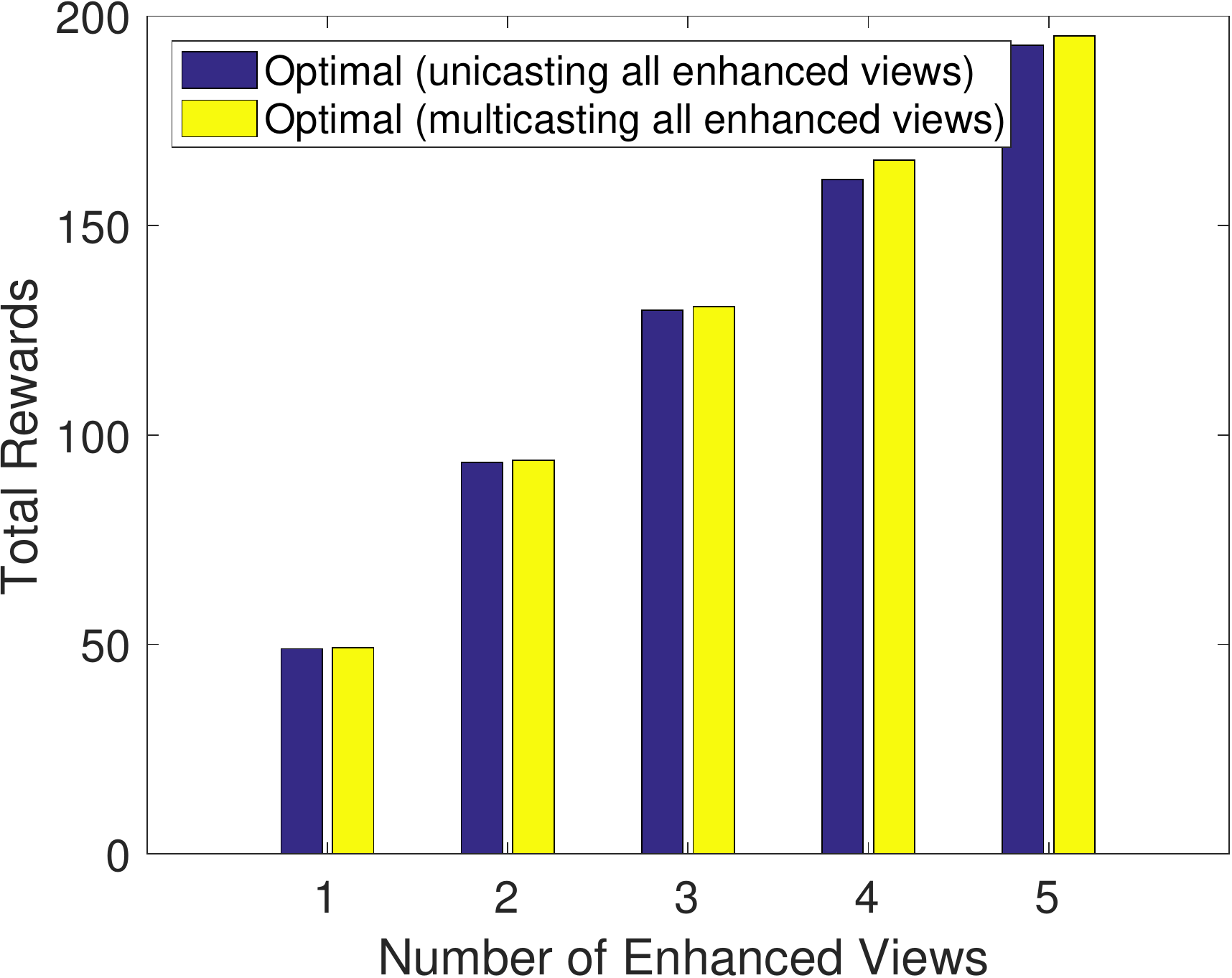}
        \caption{Hotspot/Uniform}
    \end{subfigure}%
    ~ 
    \begin{subfigure}[b]{0.5\columnwidth}
        \centering
        \includegraphics[scale=0.24]{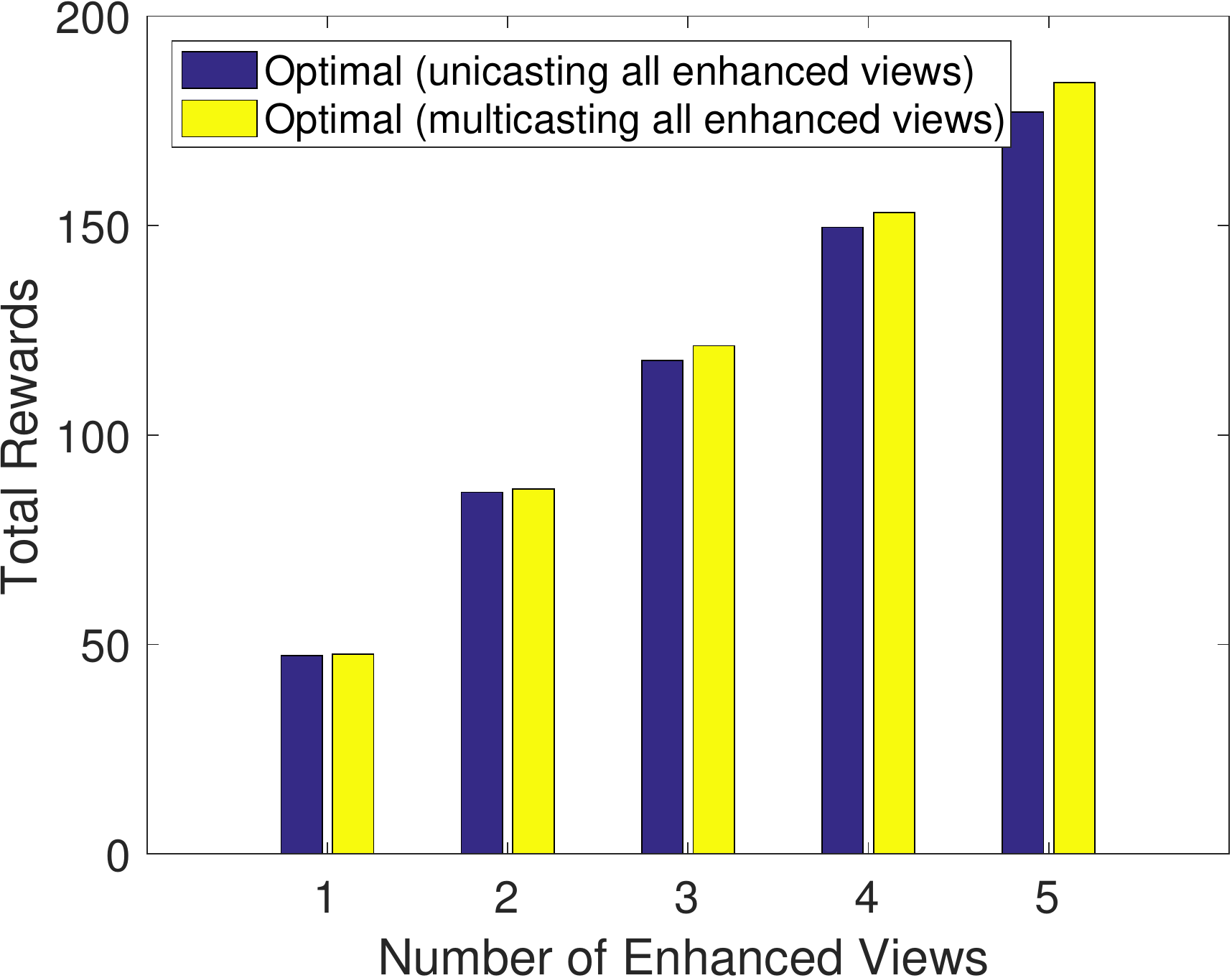}
        \caption{Uniform/Uniform}
    \end{subfigure}
\caption{Comparison of unicasting and multicasting enhanced views. (Sec. \ref{Multicasting})}
\label{fig:EVA_E2}
\centering
\end{figure}
\begin{table} [t]
\centering
\caption{Number of Small Cells with \% of Total RBs}
\vspace{-0.2cm}
\label{tab:Utilization}
\begin{tabular}{|l|l|l|l|l|l|}
\hline
\textbf{RU (\%)} & \textbf{0-20\%} & \textbf{21-40\%} & \textbf{41-60\%} & \textbf{61-80\%} & \textbf{81-100\%} \\ \hline \hline
Optimal/BB & 0 & 0 & 0 & 0 & 10 \\ \hline
ELVA & 2 & 3 & 2 & 1 & 2 \\ \hline
EVA & 9 & 1 & 0 & 0 & 0 \\ \hline
SINR & 10 & 0 & 0 & 0 & 0 \\ \hline
\end{tabular}
\end{table}

\begin{figure} [t]
\centering
  \begin{subfigure}[b]{0.15\textwidth}
        \centering
        \includegraphics[scale=0.15]{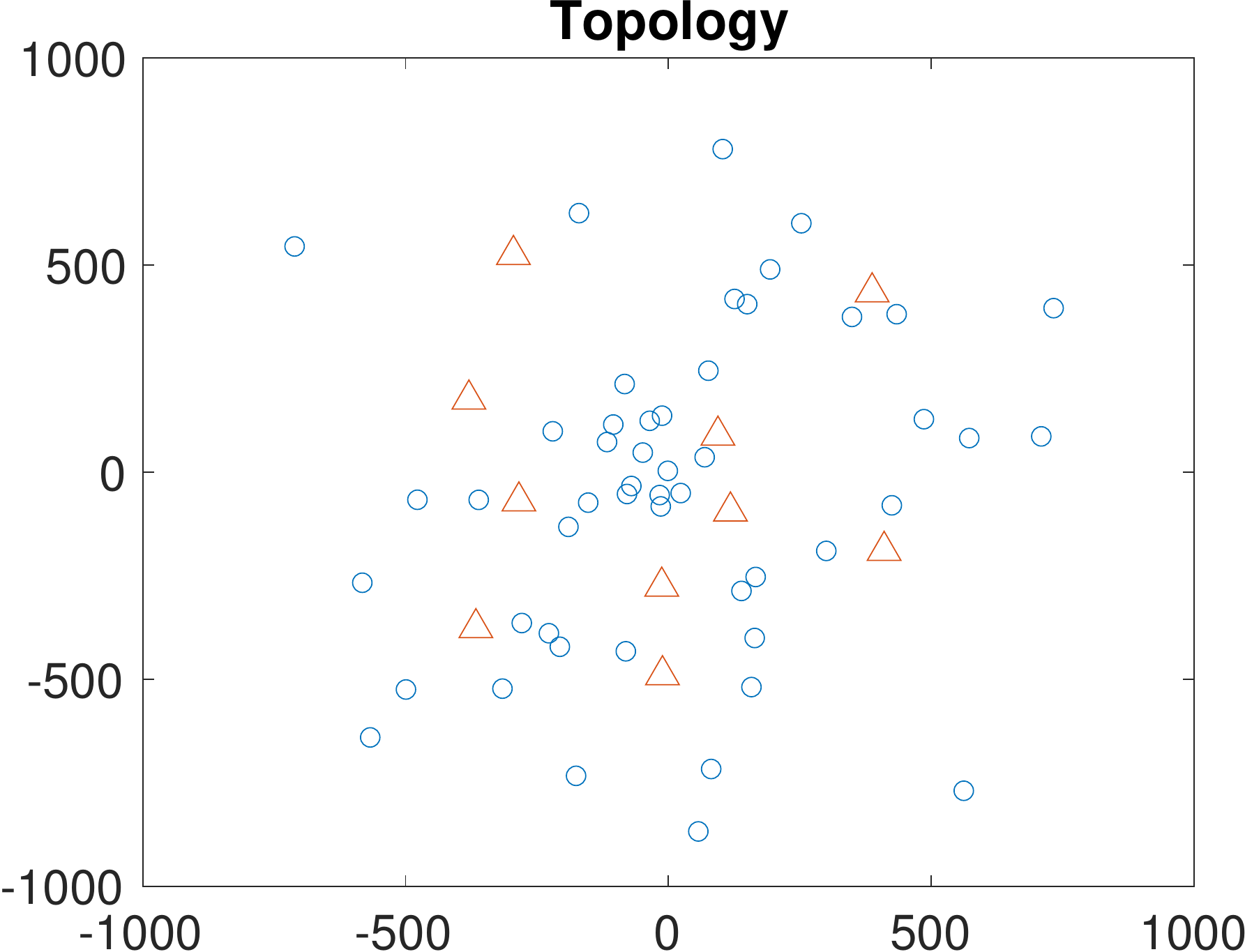}
        \caption{Topology}
    \end{subfigure}%
    ~
    \begin{subfigure}[b]{0.15\textwidth}
        \centering
        \includegraphics[scale=0.15]{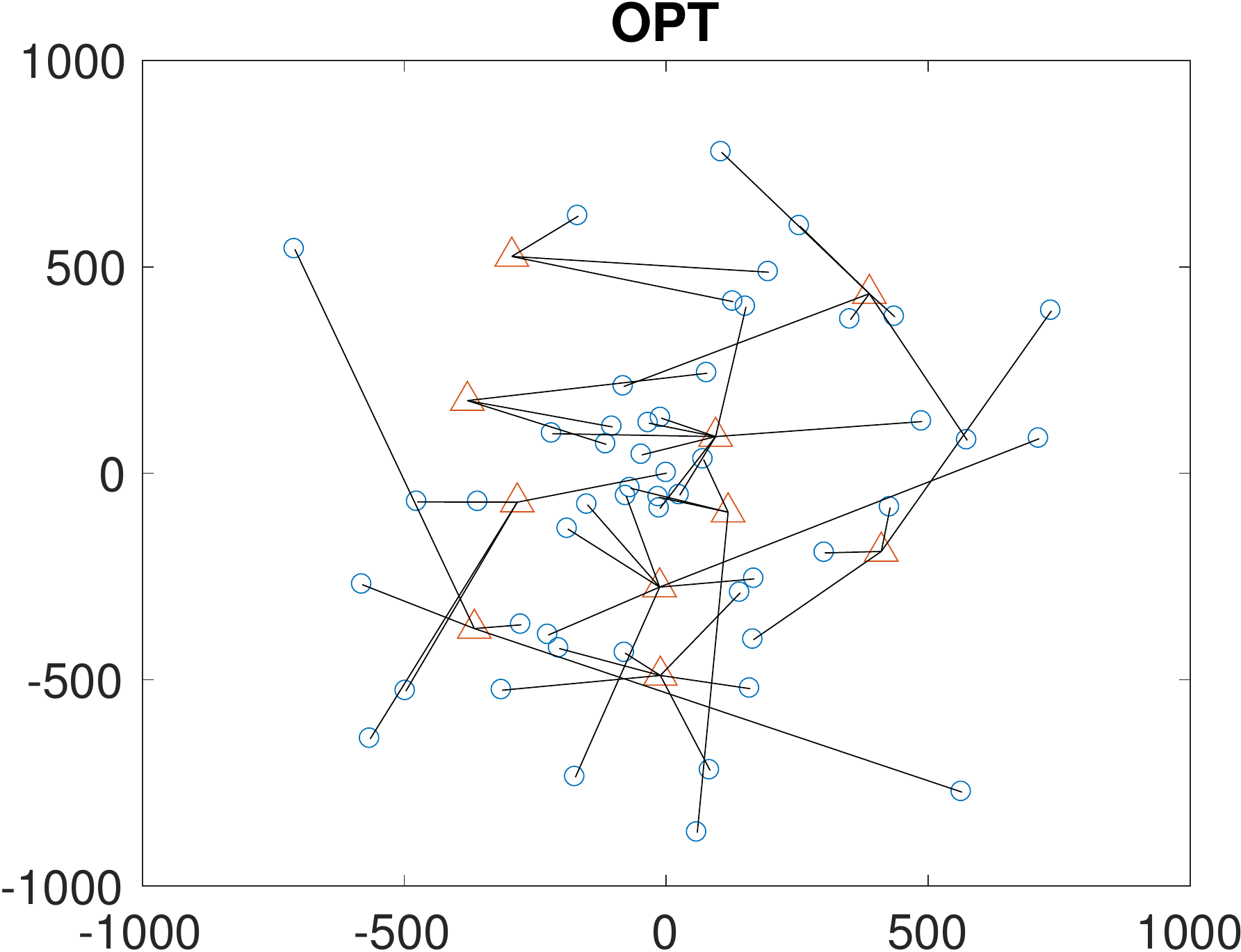}
        \caption{Optimal/BB}
    \end{subfigure}
    ~
    \begin{subfigure}[b]{0.15\textwidth}
        \centering
        \includegraphics[scale=0.15]{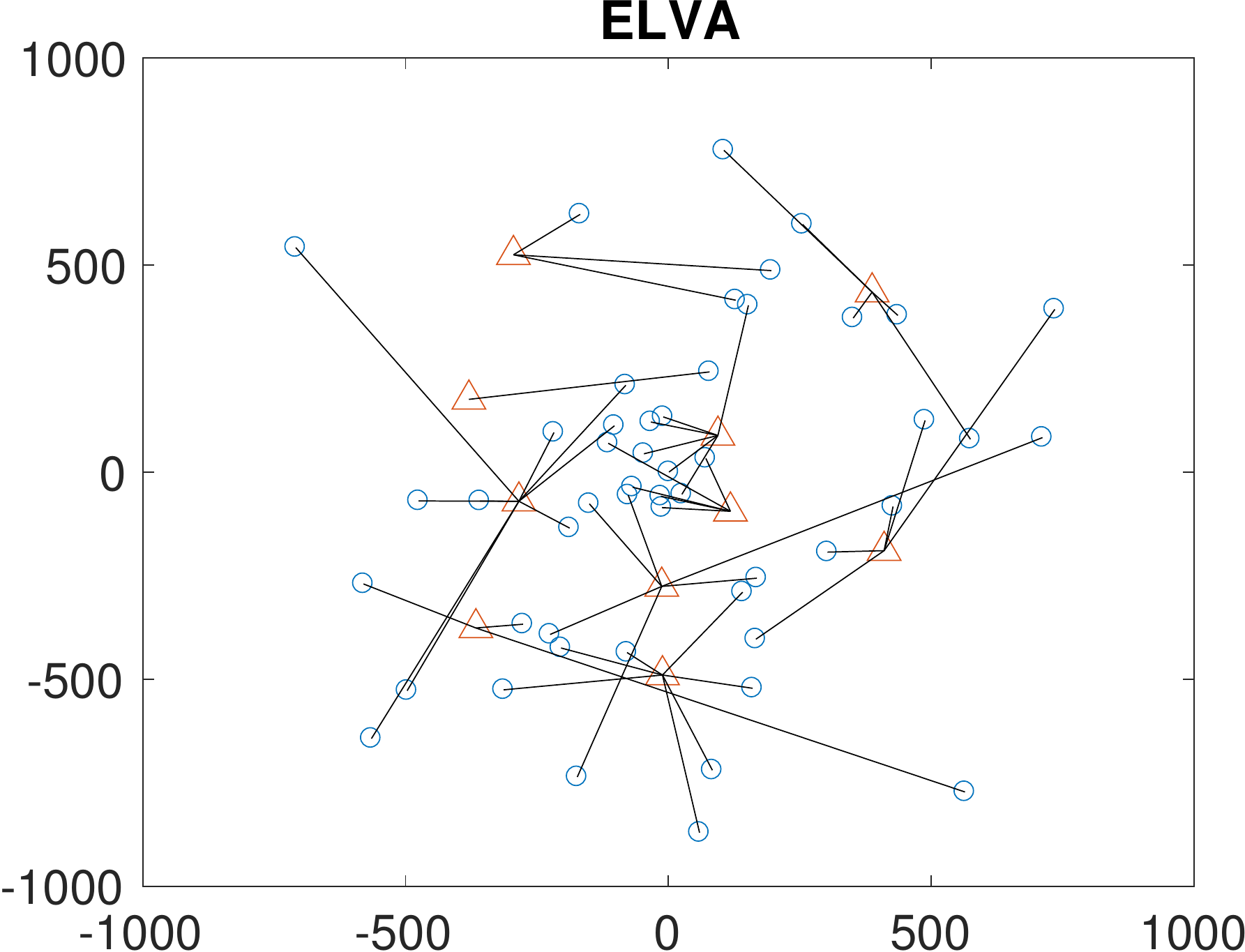}
        \caption{ELVA}
    \end{subfigure}
    ~
    \begin{subfigure}[b]{0.15\textwidth}
        \centering
        \includegraphics[scale=0.15]{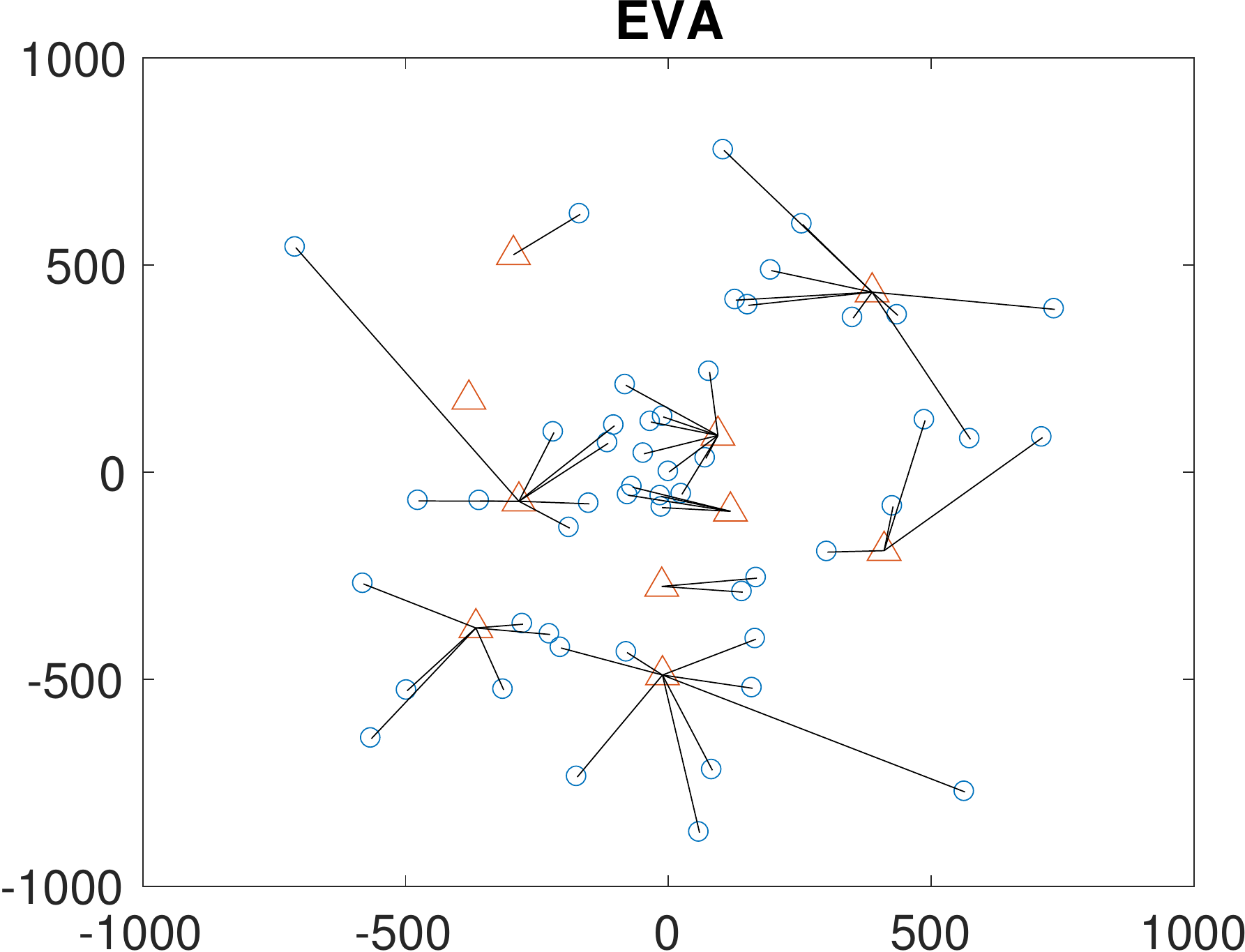}
        \caption{EVA}
    \end{subfigure}
    ~
    \begin{subfigure}[b]{0.15\textwidth}
        \centering
        \includegraphics[scale=0.15]{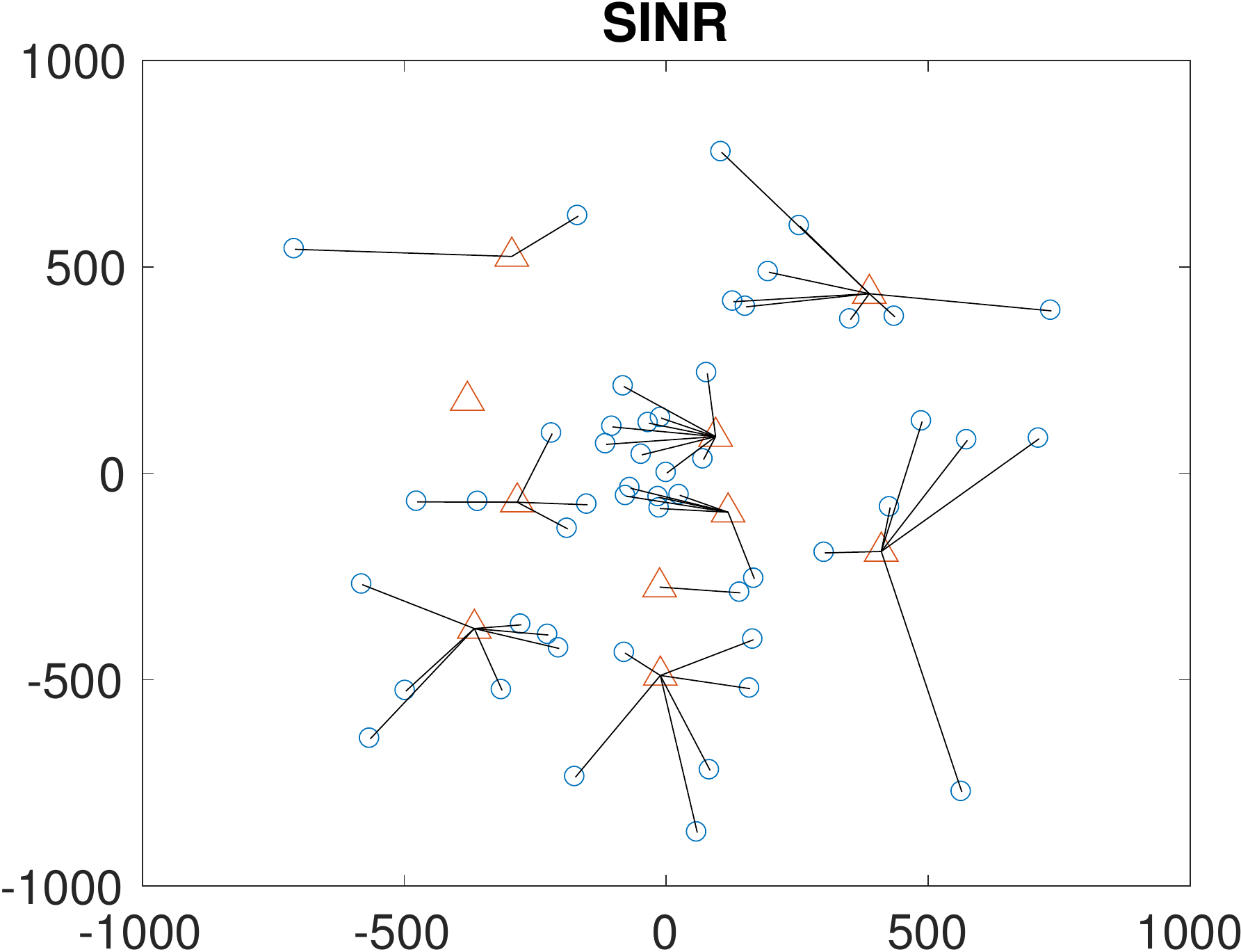}
        \caption{SINR}
    \end{subfigure}
\caption{User association in Uniform/Uniform scenario.}
\label{fig:Association}
\centering
\end{figure}

\subsection{Simulation Results - Small-Scale Study}
For the small-scale setup the default number of small cells is 10, the default number of enhanced views is 5, and the default number of VR users is 50. (Thus, on average, there are about 5 users per small cell, and hence 25 enhanced views of interest to those 5 users. As a result, each small cell cache would store on average about 10\% of the enhanced views of interest.) 

We study two scenarios in the small-scale network:
(i) small cells are ``normally" distributed with mean of $(0,0)$ and variance of 200 m in the given area, and VR users are uniformly distributed in the given area (``Hotspot/Uniform"), see Fig. \ref{fig:scenarios} (a)), and
(ii) both small cells and VR users are uniformly distributed in the given area (``Uniform/Uniform"), see Fig, \ref{fig:scenarios} (b)).

\subsubsection{Varying Number of Enhanced Views}
Fig. \ref{fig:EVA_E} and \ref{fig:EVA_E2} plot the total rewards as the number of enhanced views varies from 1 to 5.  
In Fig. \ref{fig:EVA_E}, when the number of enhanced views increases, the total rewards of all algorithms increase as expected. 
``Optimal" and ``BB" have the same performance, which validates the optimality of the Branch-and-Bound Method.
``ELVA" is near-optimal, with only 1\%$\sim$3\% difference from ``Optimal".
``EVA" has $\sim$20\% difference from ``Optimal", and ``SINR" has $\sim$40\% difference from ``Optimal.
In Fig. \ref{fig:EVA_E} (a) and (b), the difference between ``EVA" and ``SINR" in ``Uniform/Uniform" is smaller than the difference in ``Hotspot/Uniform", 
since the SINR values of cell-user pairs are closer in the former case.
Fig. \ref{fig:EVA_E2} depicts the advantage of multicasting enhanced views (see Sec. \ref{Multicasting}). 
In addition, as expected, when the number of enhanced views increases, the benefit becomes noticeable. 

\subsubsection{User Association and Resource Utilization} \label{details}
We plot the resulting user association and resource utilization when using each algorithm to understand the reason for their performance gap.
We measure resource utilization (RU) by the number of used RBs over the number of total RBs in a small cell. 
First, we plot the user association of every algorithm under the ``Uniform/Uniform" scenario in Fig. \ref{fig:Association}.
For reference only, Fig. \ref{fig:Association} (a) depicts the topology before we perform any algorithms.
As shown in Fig. \ref{fig:Association} (b) and (c), ``ELVA" makes similar choices for user association as ``Optimal/BB", explaining why ``ELVA" has a near-optimal performance. 
Moreover, in Fig. \ref{fig:Association} (d) and (e), ``EVA" makes similar choices for user association as ``SINR" which connects users to the nearest cell, and, consistent with the discussion
in the previous section, ``EVA" and ``SINR" have similar performance in the uniform topology of small cells.

Second, we present the number of small cells with different levels of RU in TABLE \ref{tab:Utilization}. 
All of the small cells in ``Optimal/BB" fully utilize their resources as expected, whereas the small cells in ``ELVA" partially utilize their resources,
because some mobile users of ``ELVA" choose different small cells to obtain the enhanced views. 
Moreover, the small cells in ``EVA" and ``SINR" underutilize their resources depending on their user-association. 
Although their mobile users connect to the nearest small cells, these mobile users cannot obtain more enhanced views from the small cells. 

\subsubsection{Varying Number of Small Cells}
Fig. \ref{fig:EVA_S} plots the total rewards versus the number of small cells which varies from 1 to 10. 
As expected, when the number of small cells increases, the total rewards of ``Optimal", ``BB", ``ELVA", and ``EVA" increase. 
``SINR" stays stable, since it only considers distance as a metric to make association decisions.

\begin{figure} 
\centering
  \begin{subfigure}[b]{0.5\columnwidth}
        \centering
        \includegraphics[scale=0.24]{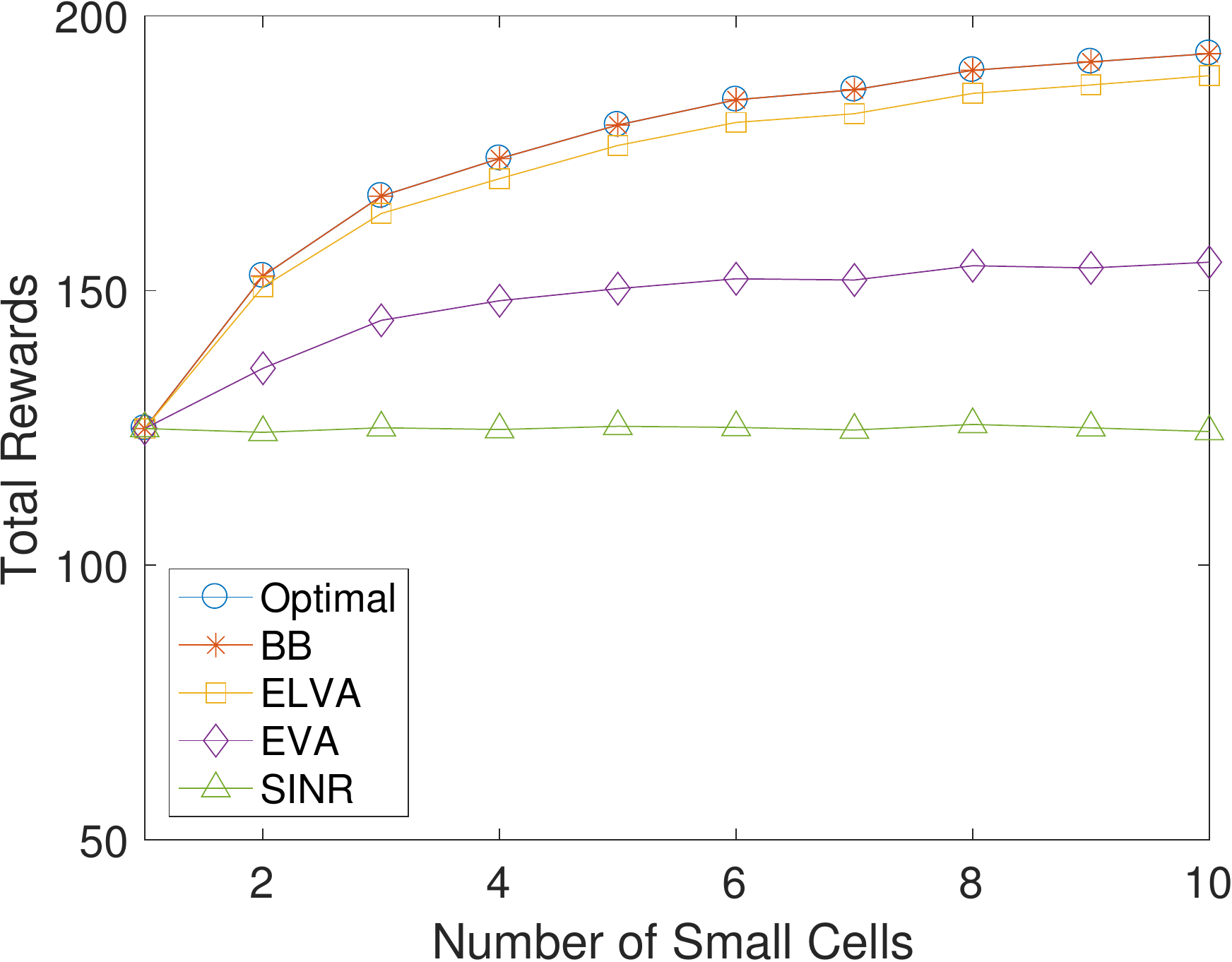}
        \caption{Hotspot/Uniform}
    \end{subfigure}%
    ~ 
    \begin{subfigure}[b]{0.5\columnwidth}
        \centering
        \includegraphics[scale=0.24]{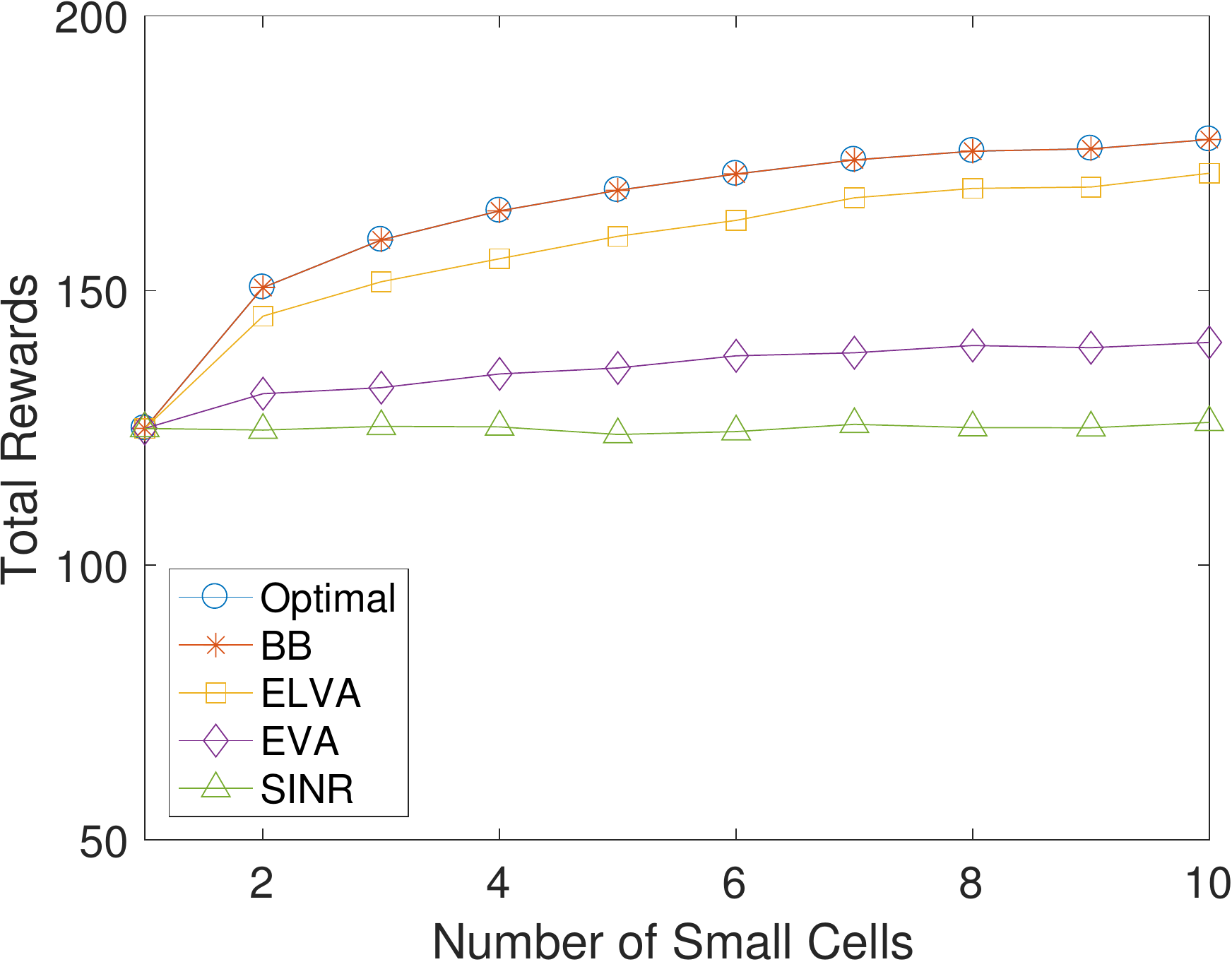}
        \caption{Uniform/Uniform}
    \end{subfigure}
\caption{Total rewards versus number of small cells.}
\label{fig:EVA_S}
\centering
\end{figure}

\subsubsection{Varying Number of Mobile Users}
In Fig. \ref{fig:EVA_M}, the total rewards are depicted versus the number of mobile users which varies from 10 to 50. 
As expected, when the number of mobile users increases, the total rewards of all algorithms increase. 

\begin{figure} 
\centering
  \begin{subfigure}[b]{0.5\columnwidth}
        \centering
        \includegraphics[scale=0.24]{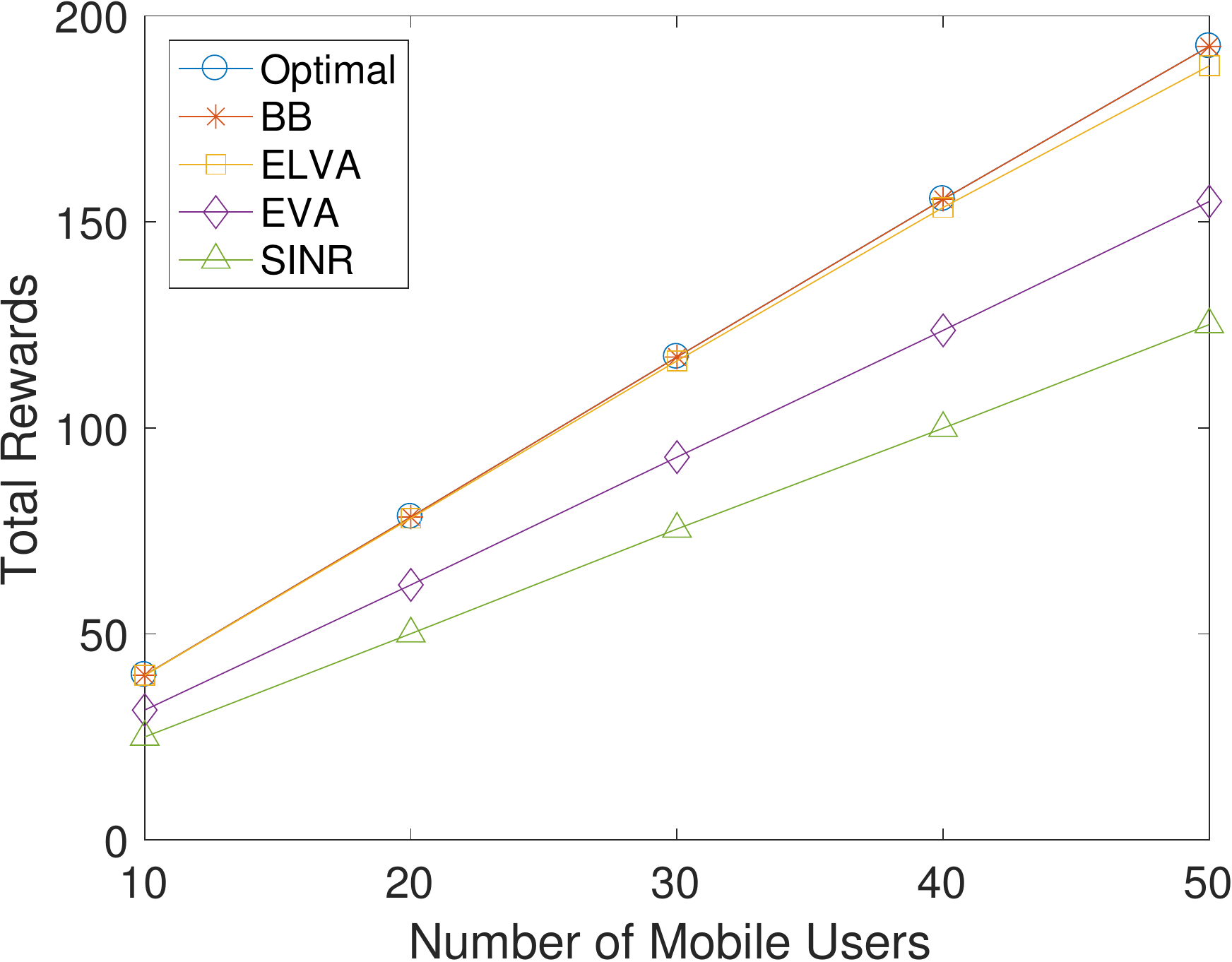}
        \caption{Hotspot/Uniform}
    \end{subfigure}%
    ~ 
    \begin{subfigure}[b]{0.5\columnwidth}
        \centering
        \includegraphics[scale=0.24]{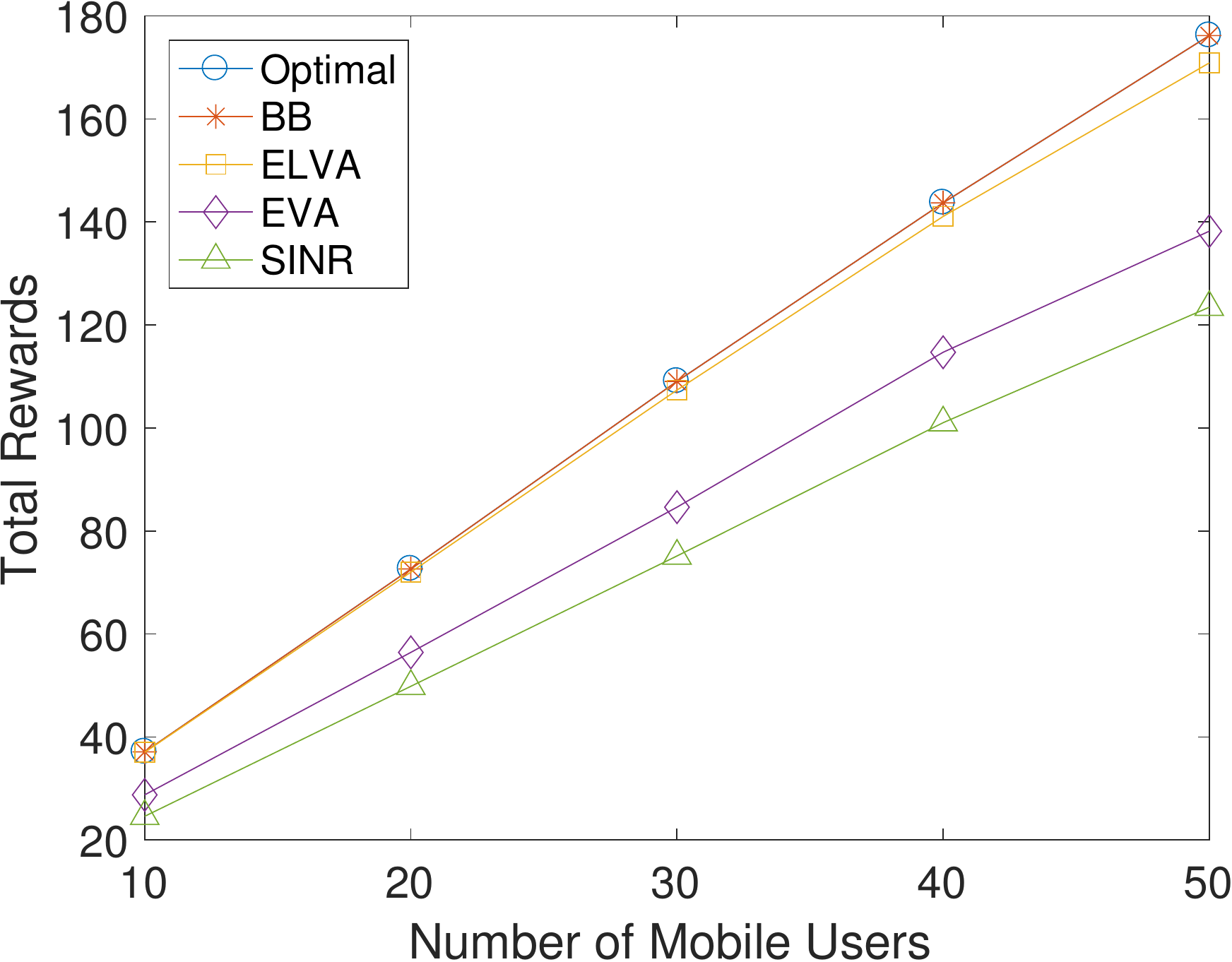}
        \caption{Uniform/Uniform}
    \end{subfigure}
\caption{Total rewards versus number of mobile users.}
\label{fig:EVA_M}
\centering
\end{figure}

\subsubsection{Varying Parameter of $p$ in EVA}
In Fig. \ref{fig:EVA_p}, we plot the total rewards versus the $p$ value in (\ref{Eva_Ratio}) which varies between 1 and 5. 
Note that with the larger $p$ value, ``EVA" puts more emphasis on the ``weight" than on SINR. 
We observe that as the $p$ value increases, the performance of ``EVA" is a concave function, which implies that
focusing on the rewards more and more (At the expense of SINR) doesn't necessarily improve performance.
More specifically, the best value of $p$ is 3 in ``Hotspot/Uniform, and the best value of $p$ is 4 in ``Uniform/Uniform". 

\begin{figure} 
\centering
  \begin{subfigure}[b]{0.5\columnwidth}
        \centering
        \includegraphics[scale=0.24]{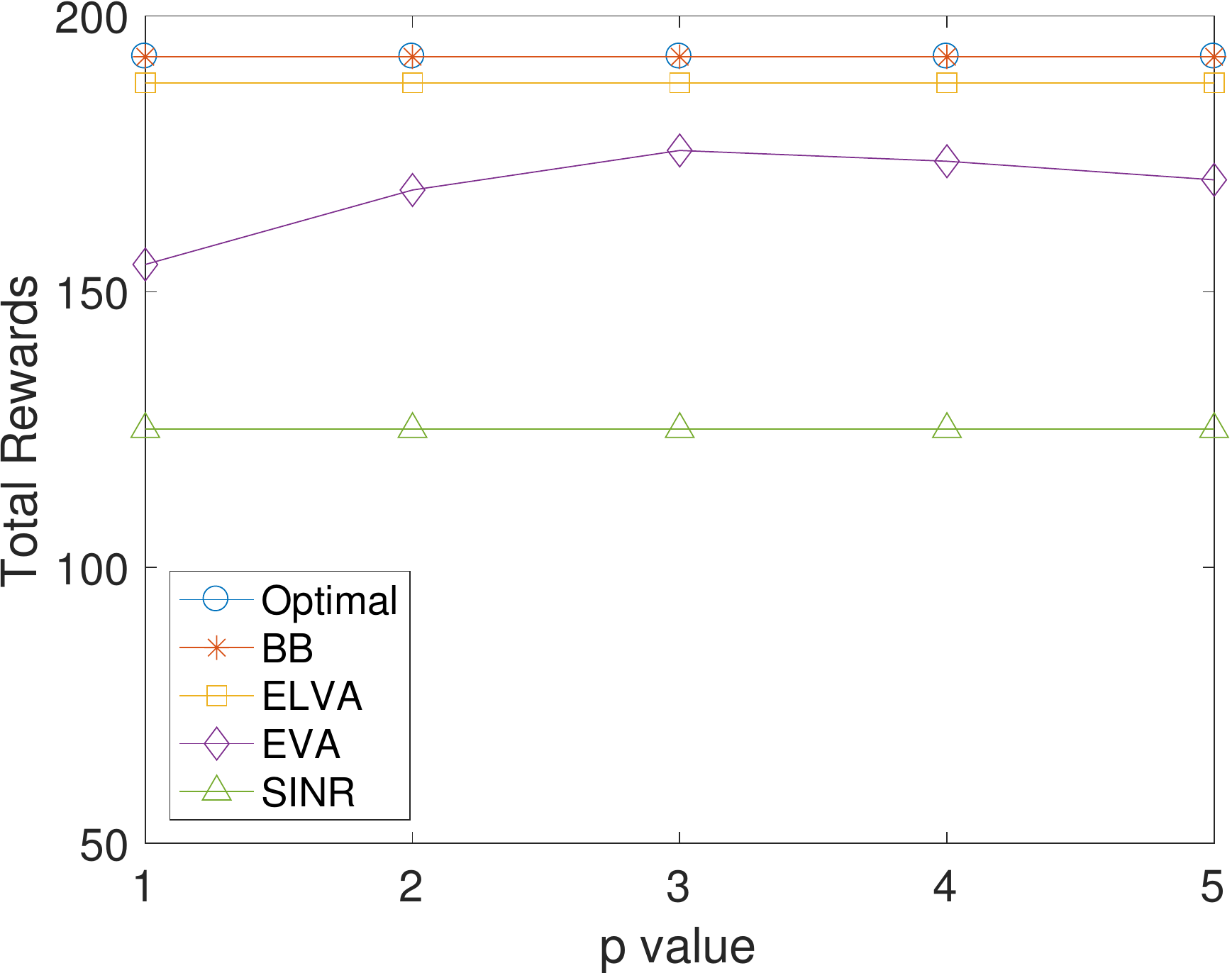}
        \caption{Hotspot/Uniform}
    \end{subfigure}%
    ~ 
    \begin{subfigure}[b]{0.5\columnwidth}
        \centering
        \includegraphics[scale=0.24]{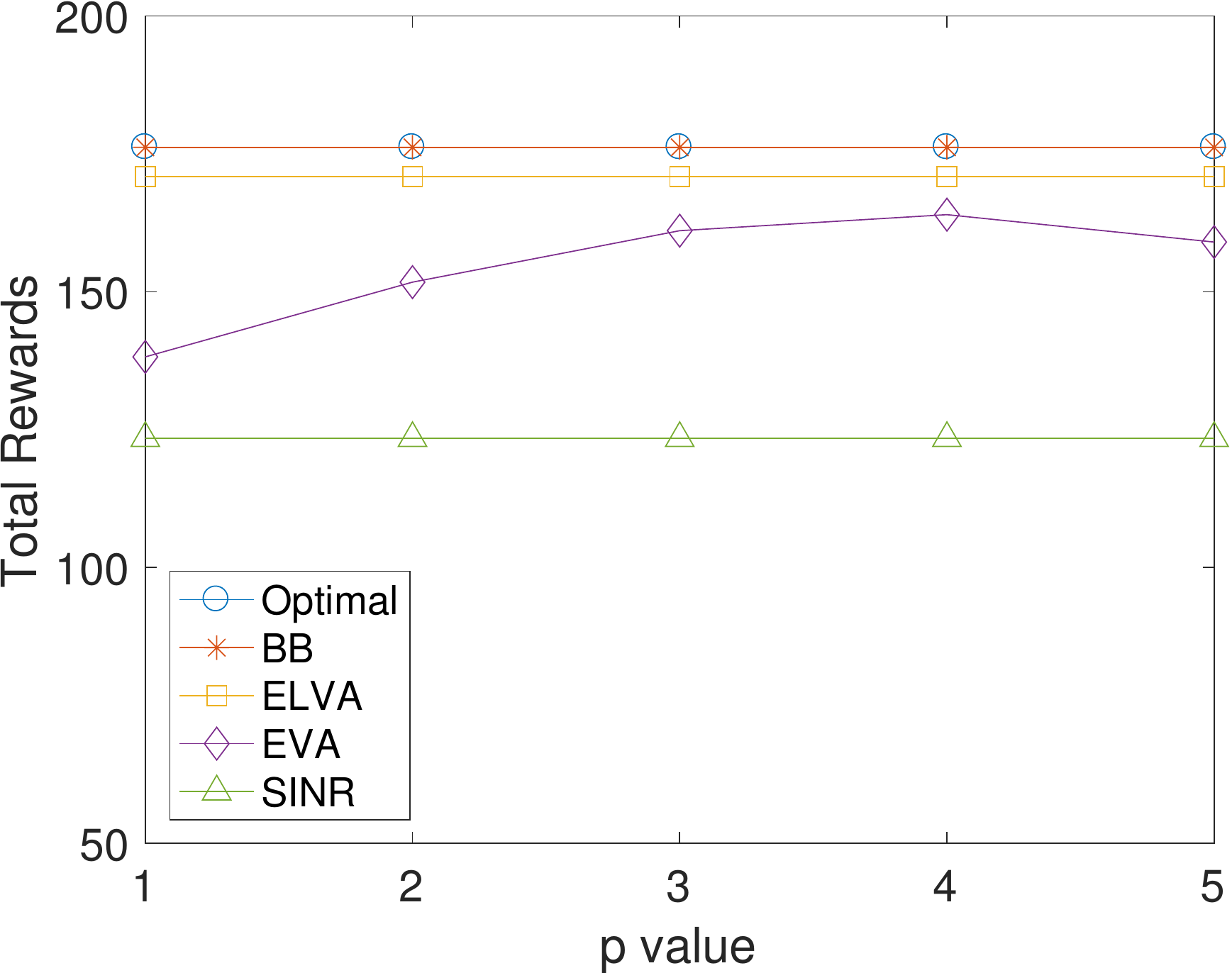}
        \caption{Uniform/Uniform}
    \end{subfigure}
\caption{Total rewards versus $p$ value in EVA.}
\label{fig:EVA_p}
\centering
\end{figure}

\subsubsection{Varying Cache Size of a Small Cell}
In Fig. \ref{fig:EVA_K} the total rewards are depicted as the cache size of a small cell varies from 1 to 10. 
We set the number of enhanced views to 10 (i.e., $|\mathcal{E}| = 10$), and the number of required enhanced views for every user is 2. 
Note that since the number of desired enhanced views for every user is 2, the maximum total reward in this case is 100. 
When the cache size of a small cell increases, the total rewards of all schemes increase as expected.

\begin{figure}
\centering
  \begin{subfigure}[b]{0.5\columnwidth}
        \centering
        \includegraphics[scale=0.24]{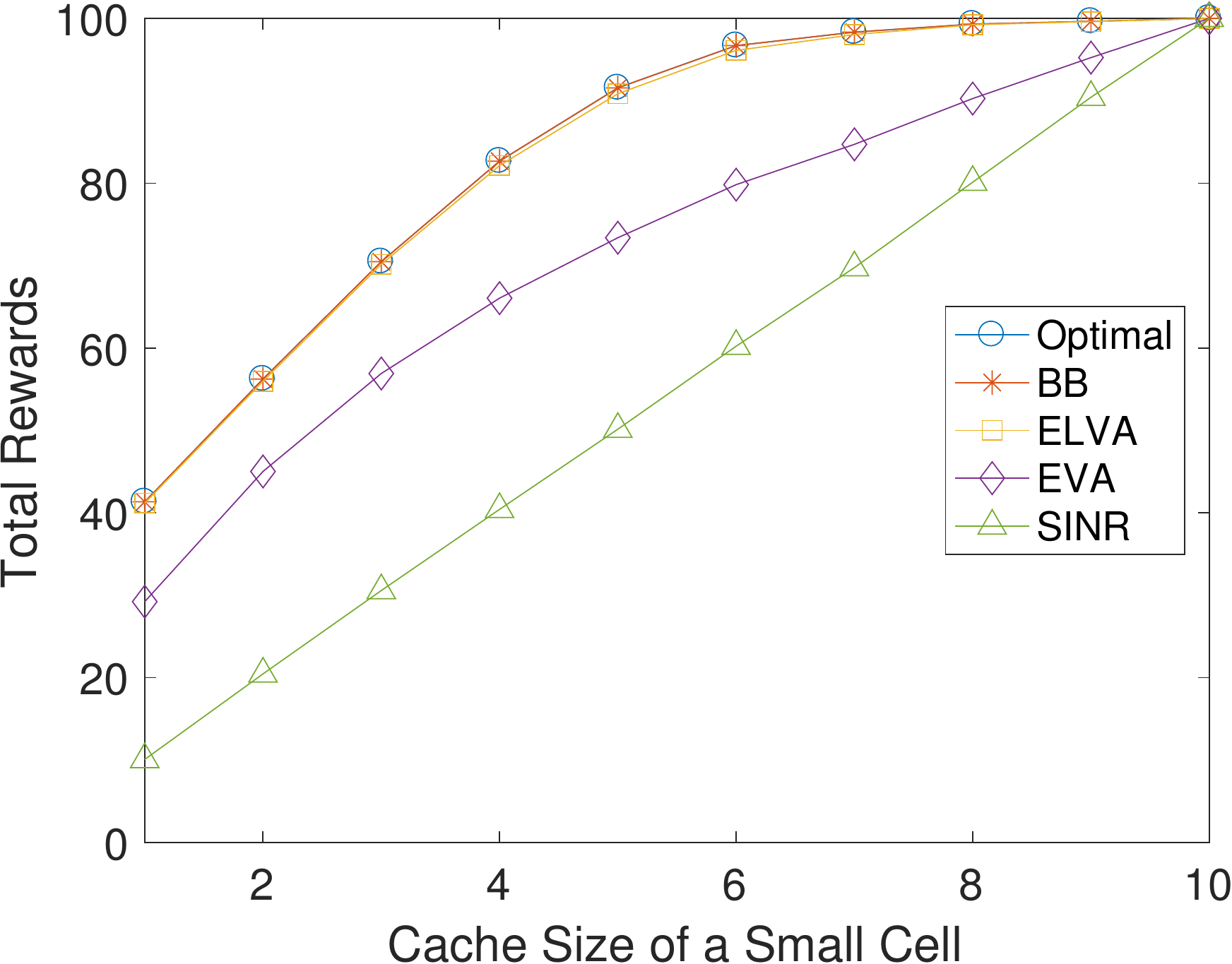}
        \caption{Hotspot/Uniform}
    \end{subfigure}%
    ~ 
    \begin{subfigure}[b]{0.5\columnwidth}
        \centering
        \includegraphics[scale=0.24]{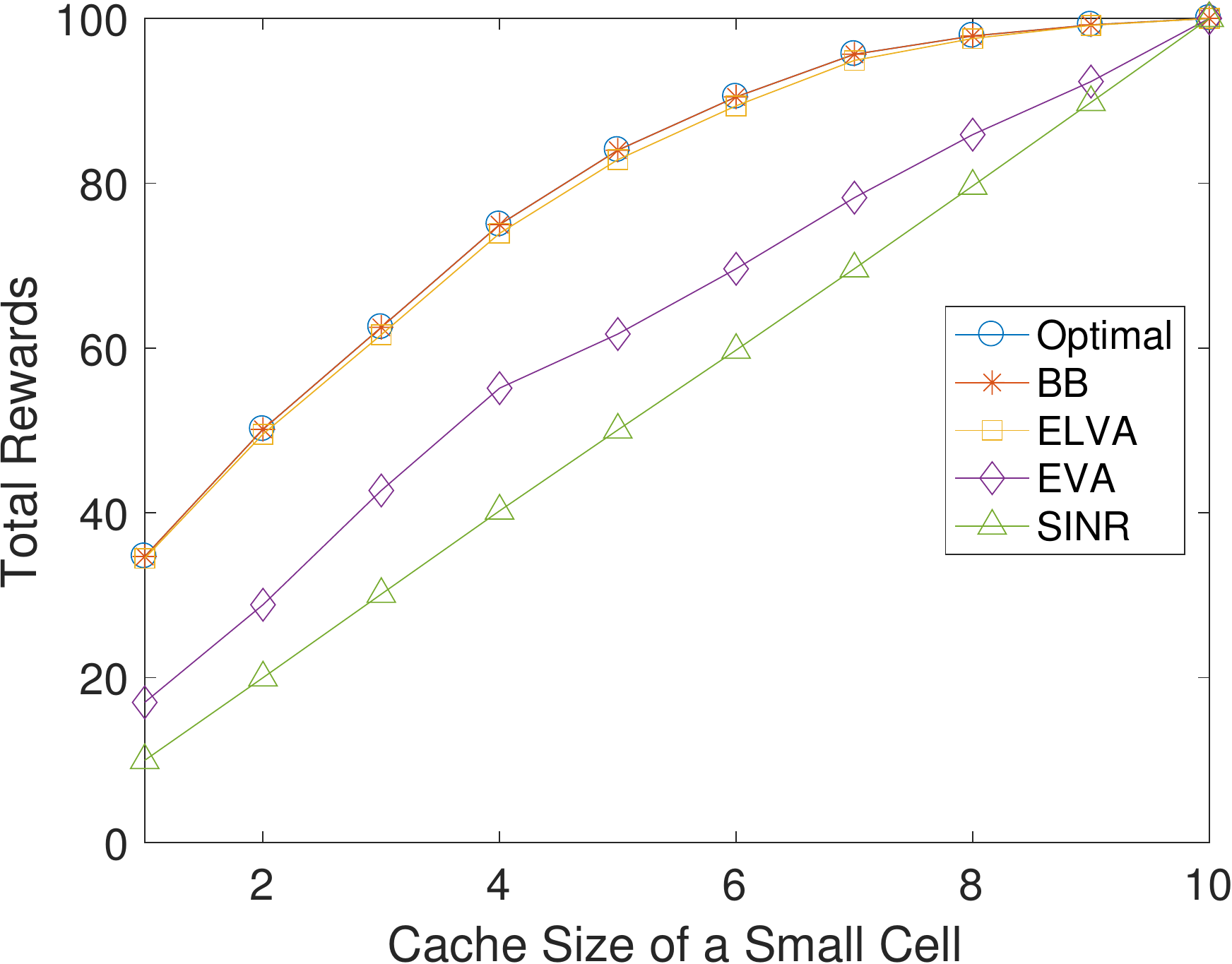}
        \caption{Uniform/Uniform}
    \end{subfigure}
\caption{Total rewards versus cache size of a small cell.}
\label{fig:EVA_K}
\centering
\end{figure}

\subsection{Simulation Results - Large-Scale Study}
\begin{figure}
\centering
  \begin{subfigure}[b]{0.5\columnwidth}
        \centering
        \includegraphics[scale=0.24]{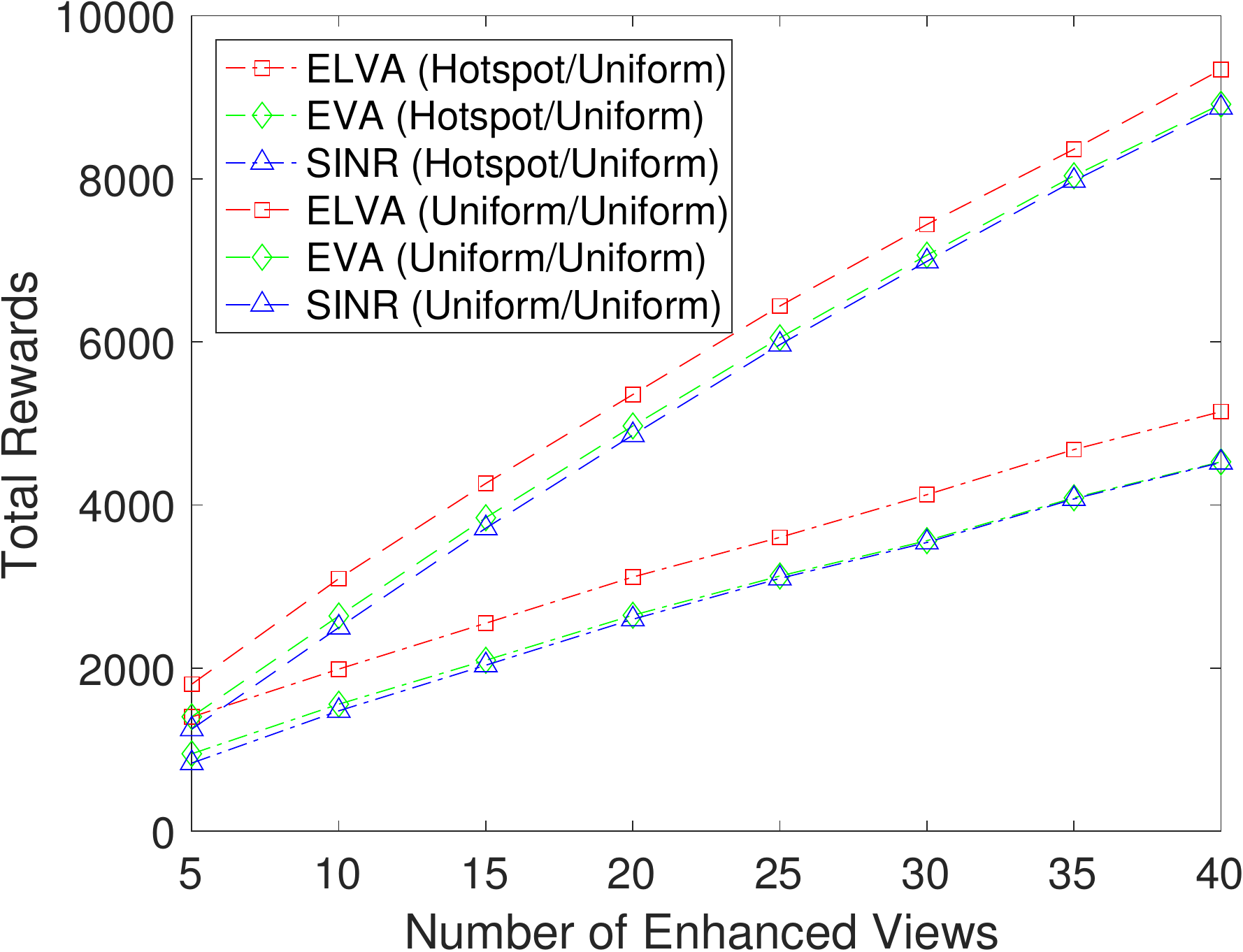}
        \caption{Total rewards versus number of enhanced views.}
    \end{subfigure}%
    ~
    \begin{subfigure}[b]{0.5\columnwidth}
        \centering
        \includegraphics[scale=0.24]{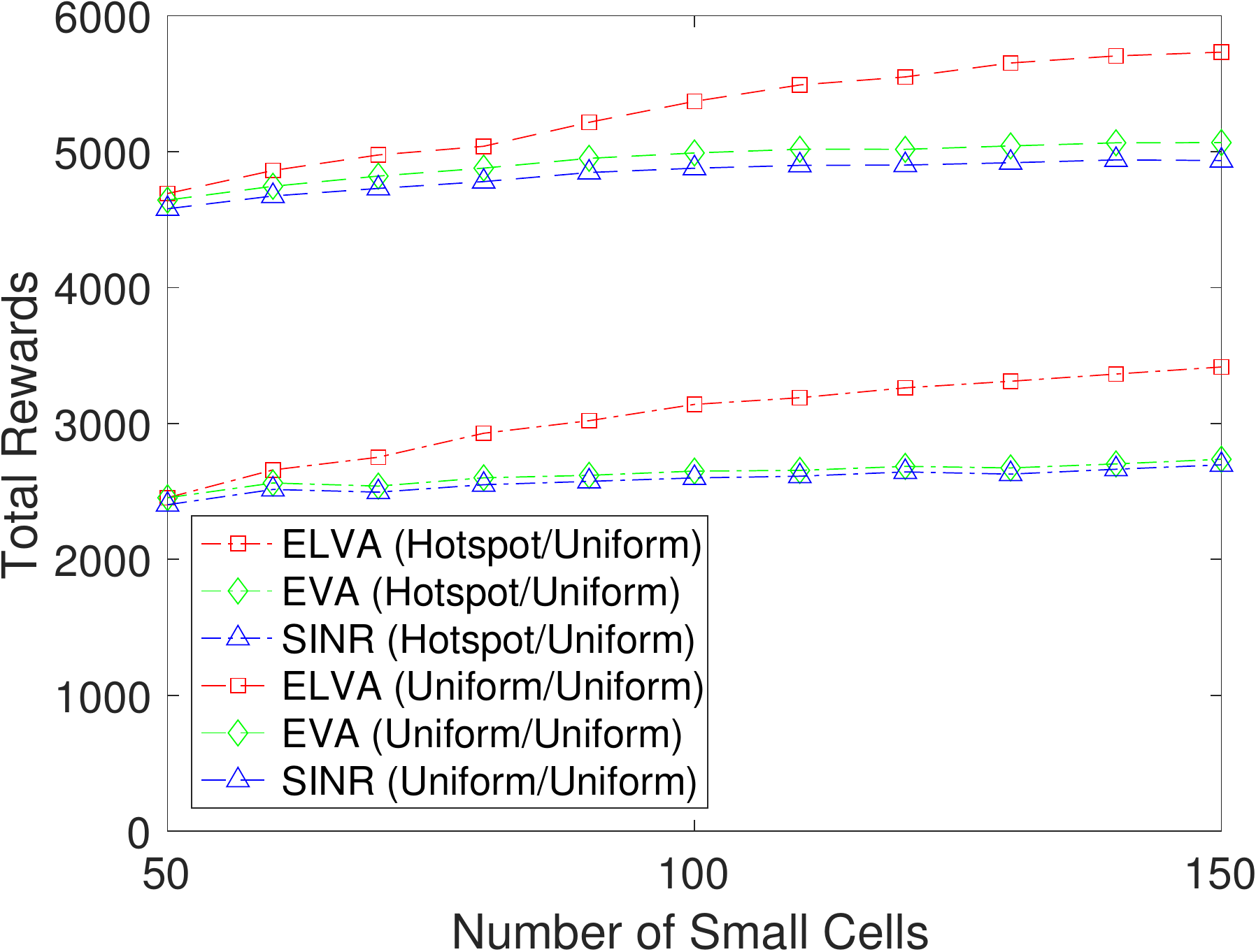}
        \caption{Total rewards versus number of small cells.}
    \end{subfigure}
    
     \begin{subfigure}[b]{0.48\columnwidth}
        \centering
        \includegraphics[scale=0.24]{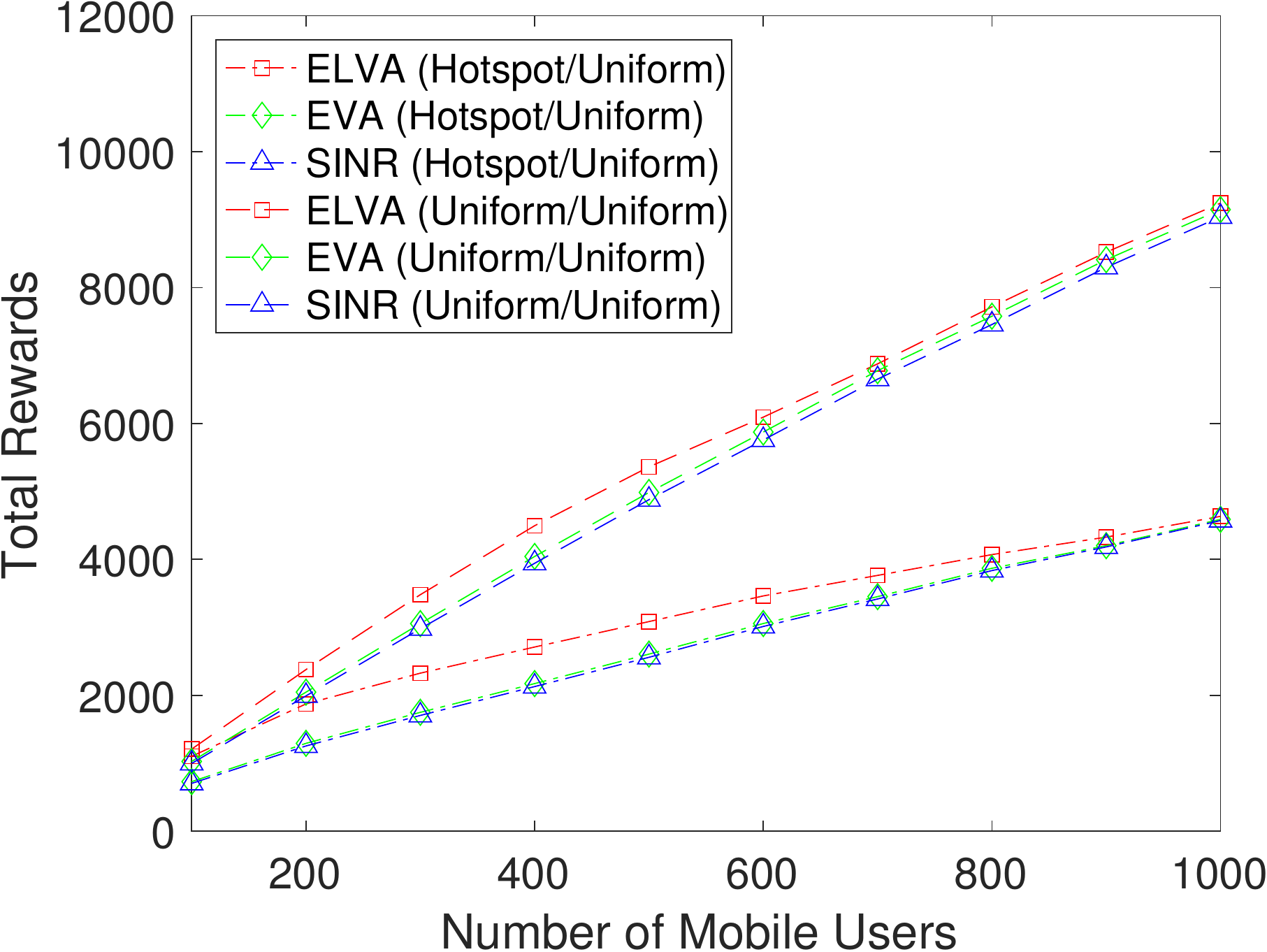}
        \caption{Total rewards versus number of mobile users.}
    \end{subfigure}
    ~
     \begin{subfigure}[b]{0.48\columnwidth}
        \centering
        \includegraphics[scale=0.25]{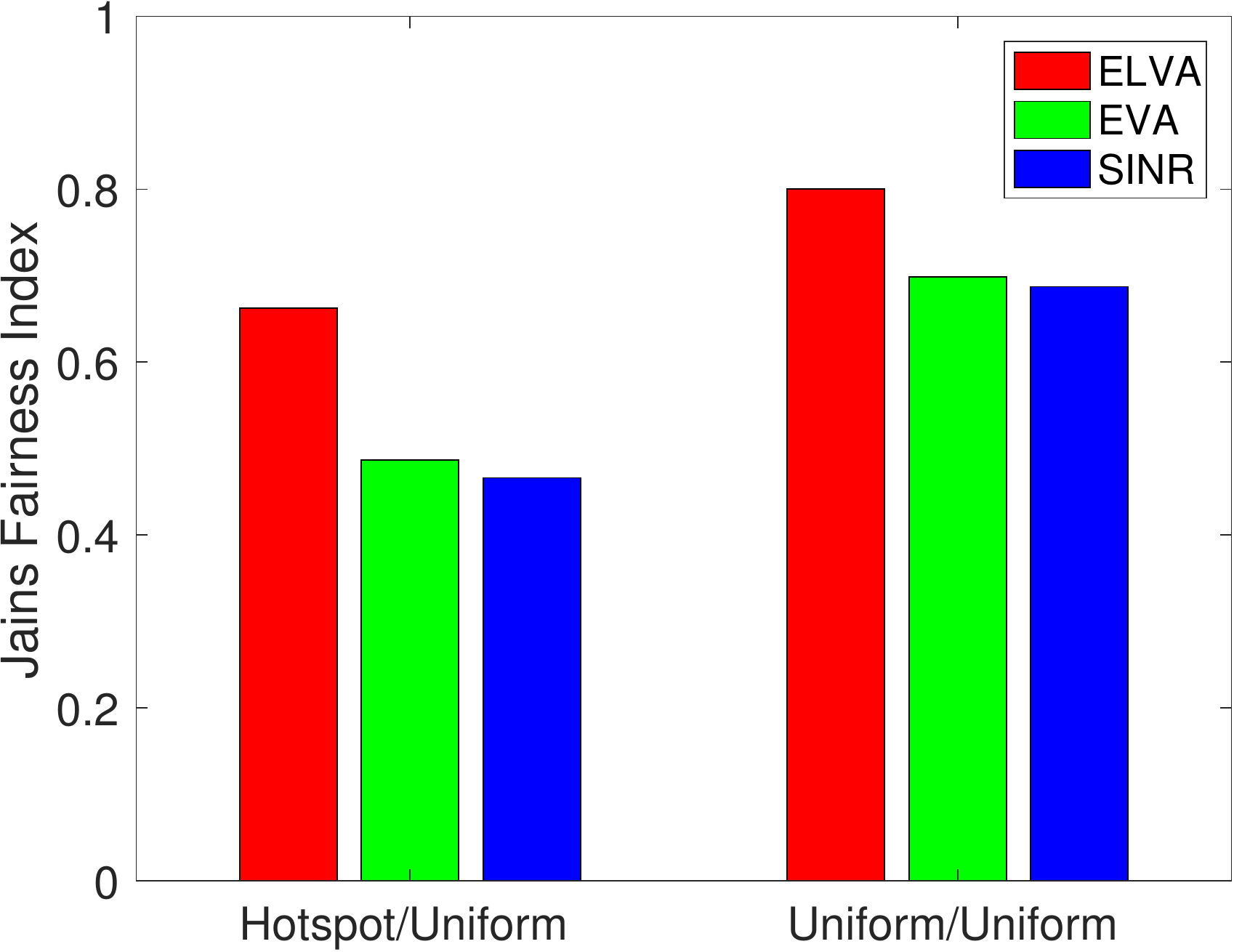}
        \caption{Jain's fairness index in the default settings.}
    \end{subfigure}
\caption{Large-scale simulation.}
\label{fig:EVA_Large}
\centering
\end{figure}
For the large-scale setup
the default number of small cells is 100 (which corresponds to about 35 small cells per square km, a dense deployment which is nevertheless less than the 75-200 small cells per km envisioned by
industry \cite{5GAmerica,Nokia}), the default number of enhanced views is 20, and the default number of VR users is 500. 
The same scenarios are considered as those in the small-scale network: ``Hotspot/Uniform" and ``Uniform/Uniform".

We vary the number of enhanced views from 5 to 40 in Fig. \ref{fig:EVA_Large} (a), the number of small cells from 50 to 150 in Fig. \ref{fig:EVA_Large} (b), 
and the number of mobile users from 100 to 1000 in Fig. \ref{fig:EVA_Large} (c). 
As shown in the figures, the trend of the total rewards under different varying parameters in the large-scale study is the same as those in the small-scale study.
This implies that the study in the small-scale study is directly applicable to that in the large-scale study.
In Fig. \ref{fig:EVA_Large} (d), we evaluate the fairness of the users using Jain's index \cite{JainIndex} in the default settings. As expected, ``ELVA" outperforms ``EVA" and ``SINR" because some users in ``EVA"  and ``SINR" are more likely to receive small rewards.

In summary, it is evident that SINR-based user association is not the best strategy for two-tier 360-degree video delivery, and the proposed polynomial time ``ELVA" 
algorithm achieves a near-optimal performance in all the scenarios.

\section{Extension} 
\label{Extension}
In this section, we briefly discuss some mathematical extensions of the problem of optimal user association and resource allocation for two-tier 360 video delivery (Problem (\ref{p1})). 
\subsection{Discrete-Level Video Coding} \label{DiscreteCoding}
Assume there are $F$ levels of video quality \footnote{This kind of video coding is applied in H.264. For example, H.264 uses four levels to represent a video \cite{H264}.}. We can reformulate Problem (\ref{p1}) as follows:
\begin{align}
\max_{x_{i,j},y_{i,j,k}}&~~\sum_{\forall i \in \mathcal{M}}{\sum_{\forall j \in \mathcal{S}}{\sum_{\forall k \in \mathcal{E}}{\frac{y_{i,j,k}}{F}w_{i,j,k}}}} \nonumber\\
\text{subject to:} &~~\sum_{\forall j \in \mathcal{S}}{x_{i,j}} = 1, \;\; \forall i \in \mathcal{M}, \nonumber\\
&~~\frac{y_{i,j,k}}{F} \leq x_{i,j}w_{i,j,k},  \;\; \forall i \in \mathcal{M}, \forall j \in \mathcal{S}, \forall k \in \mathcal{E}, \nonumber\\
&~~\max_{\forall i \in \mathcal{M}}\left\{x_{i,j}N^b_{i,j}\right\}+\sum_{\forall i \in \mathcal{M}}\sum_{\forall k \in \mathcal{E}}{\frac{y_{i,j,k}}{F}N^e_{i,j,k}} \leq N_j, \nonumber\\
&~~~~~~~~~~~~~~~~~~~~~~~~~~~~~~~~~~~~~~~~~~~~~~~~~~\forall j \in \mathcal{S}, \nonumber\\
&~~x_{i,j} = \{0,1\}, \forall i \in \mathcal{M}, \forall j \in \mathcal{S}, \nonumber\\
&~~y_{i,j,k} \in \{0,1,....,F\}, \forall i \in \mathcal{M}, \forall j \in \mathcal{S}, \forall k \in \mathcal{E}. \label{discrete_p}
\end{align}
Note that this problem cannot be solved by the proposed algorithms directly. Instead, we could relax this problem by making $y_{i,j,k} \in [0, F]$ and solve the relaxed problem by the proposed algorithms. We then round the value of $y_{i,j,k}$ to the closest level $0, 1, ..., F$ and access the outcome.

\subsection{Utility-Based Optimization} \label{UtilityOptimization}
Suppose $\mathcal{U}(\cdot)$ is the utility function mapping the received video quality to the quality of experience of a user. Then, we can reformulate Problem (\ref{p1}) as follows:
\begin{align}
\max_{x_{i,j},y_{i,j,k}}&~~\sum_{\forall i \in \mathcal{M}}{\sum_{\forall j \in \mathcal{S}}{\sum_{\forall k \in \mathcal{E}}{\mathcal{U}(y_{i,j,k}w_{i,j,k})}}} \nonumber\\
\text{subject to:} &~~\sum_{\forall j \in \mathcal{S}}{x_{i,j}} = 1, \;\; \forall i \in \mathcal{M}, \nonumber\\
&~~y_{i,j,k} \leq x_{i,j}w_{i,j,k},  \;\; \forall i \in \mathcal{M}, \forall j \in \mathcal{S}, \forall k \in \mathcal{E}, \nonumber\\
&~~\max_{\forall i \in \mathcal{M}}\left\{x_{i,j}N^b_{i,j}\right\}+\sum_{\forall i \in \mathcal{M}}\sum_{\forall k \in \mathcal{E}}{y_{i,j,k}N^e_{i,j,k}} \leq N_j, \nonumber\\
&~~~~~~~~~~~~~~~~~~~~~~~~~~~~~~~~~~~~~~~~~~~~~~~~~~\forall j \in \mathcal{S}, \nonumber\\
&~~x_{i,j} = \{0,1\}, \forall i \in \mathcal{M}, \forall j \in \mathcal{S}, \nonumber\\
&~~y_{i,j,k} \in [0,1], \;\; \forall i \in \mathcal{M}, \forall j \in \mathcal{S}, \forall k \in \mathcal{E}. \label{utility_p}
\end{align}
Note that if the utility function, $\mathcal{U}(\cdot)$, is convex, we can reapply all the proposed algorithms to solve this problem. 
\section{Conclusion} 
\label{Conclusion}
We jointly optimized user-cell association and resource allocation for delivering two-tier 360 video in wireless virtual/augmented reality. 
We formulated the problem using mixed integer linear programming, proved it is a NP-hard, described an optimal algorithm and proposed
a polynomial time approximation algorithm which was shown to be near optimal in practice.
Simulation results also established that the proposed algorithm can boost user experience by at least 30\% compared to baseline user association schemes. 
\bibliographystyle{ieeetr}
\bibliography{ref/reference}
\end{document}